\numberwithin{equation}{section} 
\newcommand{\bo}{{\bf 0}}
\newcommand{\bF}{{\bf F}}
\newcommand{\bone}{{\bf 1}}
\newcommand{\by}{{\bf y}}
\newcommand{\bX}{{\bf X}}
\newcommand{\bV}{{\bf V}}
\newcommand{\bx}{{\bf x}}
\newcommand{\bz}{{\bf z}}
\newcommand{\bxi}{{\bm \xi}}
\newcommand{\bbeta}{{\bm \eta}}
\newcommand{\bS}{{\bm \Sigma}}
\newcommand{\bmu}{{\bm \mu}}
\newcommand{\mb}{\mbox}
\newcommand{\md}{\mbox{d}}
\newcommand{\mS}{{\bf S}}
\newtheorem{theorem}{Theorem}[section]
\newtheorem{Lemma}{Lemma}[section]
\newtheorem{corollary}{Corollary}[section]
\newtheorem{proof}{Proof}[section]
\journal{Journal }
\begin{document}

\begin{frontmatter}



\title{Likelihood-based tests on linear  hypotheses of  large dimensional  mean vectors  with unequal  covariance matrices.}

\corref{mycorrespondingauthor}
\cortext[mycorrespondingauthor]{Corresponding author}

\author{Dandan Jiang\fnref{myfootnote}}
\fntext[myfootnote]{Supported by Project 11471140 from NSFC.}

\address{School of Mathematics, \\
Jilin University, \\
2699  QianJin Street, \\
Changchun {\rm 130012}, China.}
\ead{jiangdandan@jlu.edu.cn}

\begin{abstract}
This paper considers testing linear hypotheses of a set of  mean vectors   with unequal  covariance matrices in  large dimensional setting.  The problem of  testing the  hypothesis 
$H_0 : \sum_{i=1}^q  \beta_i  \bmu_i  =\bmu_0 $ for a given vector $\bmu_0$ is studied from the view of likelihood, which makes the  proposed tests   more powerful. 
We use  the  CLT for linear spectral statistics of  a large dimensional  $F$-matrix in \citet{Zheng}
 to establish the new test statistics in large dimensional framework,  so that the proposed tests can be  applicable for large dimensional  non-Gaussian variables in a wider range. Furthermore,  our new  tests provide more  optimal  empirical powers due to the likelihood-based statistics, meanwhile their empirical sizes are closer to the significant level.  Finally,  the simulation study is provided to compare the proposed tests with other high dimensional  mean vectors tests for evaluation of  their performances.

\end{abstract}

\begin{keyword}
Large dimensional data \sep Linear hypothesis  \sep  Mean vectors  tests\sep Random matrix theory

\MSC[2010] 62H15\sep  62H10

\end{keyword}

\end{frontmatter}


\section{Introduction} \label{Int}

Testing on mean vectors endures as an old, yet active research field with the applications of multiple comparisons, MANOVA  and classification. Continuing improvements on data acquisition techniques and the ease of access to high computation power  pose constant  challenges in applying the traditional statistical methods to these emerging data sets, 
because  they are established on the basis of fixed dimension $p$ as the sample size $n$ tends to infinity.  Within this context, more and more  attention is  paid to  find the efficient  testing methods for 
high dimension data  and much progress has been made in this respect. A special attention has been given to the linear hypothesis test of mean vectors,  which is an important  part of multivariate statistical analysis and   widely used in the  biology, finance and  etc.  
Suppose  $\bX_i=(\bx_{i1},\cdots,\bx_{in_i})',  i=1,\cdots,q $   to be the  independent   sample from 
$q$ population   with mean $\bmu_i$ and covariance matrix $\bS_i,  i=1,\cdots,q, $ respectively,  where  $\bS_i$'s are unequal $p\times p$ covariance matrix.
Consider the  test  hypothesis   
   \begin{equation}
   H_0 : \sum\limits_{i=1}^q  \beta_i  \bmu_i  =\bmu_0   \quad \mb{v.s.} \quad  H_1: \mb{not}~ H_0, \label{H1}
   \end{equation} 
where $\beta_1,\cdots, \beta_q$ are the given scalars and $\bmu_0$ is a  known vector. 
Of course, the  multivariate Behrens-Fisher problem and MANOVA are covered as  the special cases. A classical solution  was first proposed by \citet{Bennett}, see also in \citet{A03}, which was 
an extension of the methodology for two-sample case in \citet{Scheffe} to the multiple case. Then  many efforts have been devoted  to develop the solutions  in  the large dimensional data setting.   To be specific,  \citet{BS96} investigated  the two-sample case under the normal assumption and equal covariance matrices, and extended 
Hotelling's $T^2$ test  to the $p>N$ setting.  Motivated  by this work, \citet{Chen}  proposed a two-sample test for the  equality of the means of high dimensional data with unequal covariance matrices.  \citet{AY2011} derived  a nonparametric  test  for which the significant levels are not effected by the population distribution assumption.  Also,   \citet{Fujikoshi},  \citet{SriFuji}, \citet{Sri07}, \citet{Schott},   \citet{SriDu}, \citet{Sri09}, \citet{Sri13} and \citet{Hu2015}  were proposed for the test on the equality of high dimensional mean vectors. 
\citet{JSPI} focused on the testing linear hypothesis on the mean  vectors of normal populations with unequal covariance matrices when the dimensionality $p$ exceeds the sample size $n_i$. They proposed a new test procedure based on the Dempster trace criterion and showed its consistency  in high dimension setting.

Different from the previous works, we proposed  new tests  based on likelihood  for  the hypothesis (\ref{H1})   by  the  CLT for LSS of  a large dimensional  $F$-matrix in \citet{Zheng}.  The tests  in this work  were suitable for non-Gaussian variables in a wider range.  More important,  our proposed  tests  provided the more accurate sizes and  achieved much better performance on the empirical powers than other  high dimensional test methods, which had been  sustained by the simulation.    Finally,  the restricted condition was relaxed  to the finite $4$-th moment  compared  with \citet{Chen} and \citet{JSPI},  which made our tests  more applicable.

The remainder  of the article  is organized as follows. Section \ref{Pre}  gives   a quick review of the  linear hypothesis  test of mean vectors,  then  the  CLT for LSS of  a large dimensional  $F$-matrix in \citet{Zheng} is also provided in this part.  In Section \ref{New}, we propose  the  new  testing statistics in large dimensional setting  based on the classical  likelihood  test.  Simulation results are presented  to evaluate the performance   of  our test compared with other high dimensional mean vectors tests in Section \ref{Sim}. Finally, a conclusion is drawn in the  Section \ref{Con}, and the proofs and derivations are listed in the  \ref{App}


\section{Problem Description and Preliminary}  \label{Pre}
 In this section,  the problem of testing the linear hypothesis of mean vectors   is described in details.  As mentioned above,   a classical test  statistic   is proposed by Bennett (1951)  under Gaussian assumption, i.e.$\bX_i=(\bx_{i1},\cdots,\bx_{in_i})', i=1,\cdots,q $   be $n_i$ independent samples from $N_p(\bmu_i, \bS_i)$.
Without loss of generality,  assume $n_1$ is the  least  one if all the sample sizes 
$n_1,\cdots, n_q$ are different.  Set
\begin{equation}
\by_k=\beta_1\bx _{1 k}+\sum\limits_{i=2}^q \beta_i
\sqrt{\frac{n_1}{n_i}} \left(\bx _{i
k}-\frac{1}{n_1}\sum\limits_{l=1}^{n_1} \bx _{i l}+
\frac{1}{\sqrt{n_1n_i}}\sum\limits_{m=1}^{n_i}\bx _{i m}\right). \label{yx}
\end{equation}
where $k=1, \cdots, n_1$. It is simplified as $\by_k=\sum\limits_{i=1}^q\beta_i\bx _{i k}$,  if $n_1=n_2=\cdots=n_q$.
Then it is obviously that
\[\mb{E}\by_k=\sum_{i=1}^q\beta_i\bmu_i\]
and
\[
\mb{E}(\by_k-\mb{E}\by_k)(\by_l-\mb{E}\by_l)'=\delta_{kl}\left(\sum\limits_{i=1}^q\frac{\beta_i^2n_1}{n_i}\bS_i\right),
\]
where $\delta_{kl}$ is a  Kronecker's delta function. 
Meanwhile, make the denotations as  below
\[
\overline{\by}=\frac{1}{n_1}\sum\limits_{k=1}^{n_1}\by_k=\sum_{i=1}^q\beta_i\overline\bx
_i, \quad\overline\bx _i=\frac{1}{n_i}\sum\limits_{l=1}^{n_i}\bx _{il},
\]
and 
\begin{equation}
\mS=\frac{1}{n_1-1}\sum\limits_{k=1}^{n_1}(\by_k-\overline{\by})(\by_k-\overline{\by})'. \label{S}
\end{equation}
So the classical test statistic according to  Bennett (1951) is
\begin{equation}
T^2=n_1(\overline{\by}-\bmu_0)'\mS^{-1}(\overline{\by}-\bmu_0)\label{T2},
\end{equation}
which follows a $p$-dimensional  $T^2$ with $n_1-1$ freedom degree for any fixed $p$. 

However,  it is not the case when the dimension $p$ grows larger. Denote 
\[\sum_{i=1}^q\beta_i\bmu_i  \equiv \bmu, \quad   \sum\limits_{i=1}^q\frac{\beta_i^2n_1}{n_i}\bS_i \equiv \bS,\]
 then  we note that  $\by_1,\cdots,\by_{n_1}$ follows the distribution $N_p(\bmu,\bS)$ independently under the Gaussian assumption. Because 
 $\sqrt{n_1}(\overline\by-\bmu) \sim N_p(\bo,\bS)$,  then it is obtained that
 $n_1(\overline{\by}-\bmu_0)(\overline{\by}-\bmu_0)' \sim W_p(1, \bS) $ and
  $(n_1-1) {\mS}  \sim  W_p(n_1-1, \bS)$ under the null hypothesis,  and they are independent with each other,
 where $ W_p(f, \bS)$ means a $p$ dimensional Wishart distribution with freedom degree $f$ and parameter $\bS$.
Define the matrix
\begin{equation}
\bF=\frac{n_1(\overline{\by}-\bmu_0)(\overline{\by}-\bmu_0)'}{\mS},\label{F}
\end{equation}
according to  the definition of $F$-matrix, $\bF$ is an  $F$-matrix with  freedom degree $(1,n_1-1)$. Thus the classical 
$T^2$-test statistic  in (\ref{T2}) can be represented as 
\[
T^2=\mb{tr}\left(n_1(\overline{\by}-\bmu_0)(\overline{\by}-\bmu_0)'\mS^{-1}\right)=\mb{tr}(\bF)
\]
 Under the suitable 4-th moments constrains, 
 using the results on the limiting 
spectral distribution of $F$-matrix in eq. (4.4.1) in \citet{bookBS10}, it can be obtained with probability 1  
\begin{equation}
\frac{1}{p} T^2\!=\! \frac{1}{p}\! \sum\limits_{i=1}^p \!\lambda_i ^{\bF} 
\!\rightarrow \!  \int _{a}^b \!\frac {x(1\!-\!\gamma_2)\!\sqrt{(b-x)(x-a)}\md x}{2\pi x(\gamma _1+\gamma _2 x)}  \!=\!\frac{1}{1-\gamma_2}  \! \equiv\! d(\gamma_2)\! >\!1 
\label{limitd}
\end{equation}
where $\{ \lambda_i ^{\bF}, i=1,\cdots, p\}$ are the eigenvalues of the matrix $\bF$,  
$p/1=\gamma_p \rightarrow \gamma_1 \in (0,+\infty)$, 
$p/(n_1-1)=\gamma_{n_1} \rightarrow \gamma_2 \in (0,1)$ , $h=\sqrt{\gamma_1+\gamma _2-\gamma_1\gamma _2}$  and  
\[
a=\left( \frac  {1-h}{ 1-\gamma_2 }\right)^2,\qquad
b=\left( \frac  {1+h}{ 1-\gamma_2 }\right)^2.
\]
This result  is derived in the \ref{A1}.
As seen from above, it show that almost surely 
\[T^2= p\cdot d(\gamma_2)\] 
Thus, any test that assumes asymptotic $T$-square distribution of $T^2$ will  result in a serious error when $p$ grows higher and higher. 
Therefore, we intend to make some amendments to  the classical test by the  CLT of LSS ( linear spectral statistic ) of  large dimensional 
$F$-matrices, which is Theorem 3.2  in \citet{Zheng}.  In order to introduce it,
we first make clear  some  preliminary preparations.

 Let  $ \{\xi_{ki} \in \mathbb{C}, i, k = 1, 2, \cdots
\}$ and $\{\eta_{kj} \in \mathbb{C}, j, k = 1, 2, \cdots \}$ be either both real or both complex random arrays.   Write $\bxi_{\cdot i} = (\xi_{1i}, \xi_{2i},\cdots ,
\xi_{pi})'$ and $\bbeta_{\cdot j} = (\eta_{1j}, \eta_{2j}, \cdots,
\eta_{pj})'$. Also, for any positive integers $n_1, n_2$, 
 $ \bxi=(\bxi_{\cdot 1}, \cdots,\bxi_{\cdot n_1}) $ and $ \bbeta=(\bbeta_{\cdot
  1}, \cdots,\bbeta_{\cdot n_2}) $ can be thought as two independent samples of 
  a $p$-dimensional  observations of  size $ n_1 $and $ n_2 $, respectively.
Let $S_1$ and $S_2$ be the associated sample covariance matrices, 
 $i. e.$
\[
\mS_{1}={1\over{n_1}}\sum\limits_{i=1}^{n_1}\xi_{\cdot i}\xi_{\cdot i}^* \quad
\mbox{and}\quad
\mS_{2}={1\over{n_2}}\sum\limits_{j=1}^{n_2}\eta_{\cdot j}\eta_{\cdot
  j}^*.
\]
where $*$ stands for complex conjugate and transpose.
Then, the
following  so-called {\em F-matrix}  generalizes  the classical
Fisher-statistic to
the present $p$-dimensional case,
\begin{equation}
\bV_n=\mS_{1}\mS_{2}^{-1}\label{Fmatrix}
\end{equation}
where $n=(n_1, n_2)$ and  $n_2 > p$ is required to ensure that almost surely the matrix $\mS_2$ is 
invertible. 

 Let us also make some  assumptions as below:
 \begin{description}
\item[\emph{Assumption [A]}]  For any fixed $\epsilon_0>0$
 \begin{eqnarray*}
&& \frac{1}{n_1p}\sum\limits_{i=1}^p \sum\limits_{j=1}^{n_1}\mb{E}|\xi_{ij}|^4I(|\xi_{ij}|\geq \epsilon_0\sqrt{n_1}) \rightarrow 0; \\
&
& \frac{1}{n_2p}\sum\limits_{i=1}^p \sum\limits_{j=1}^{n_2}\mb{E}|\eta_{ij}|^4I(|\eta_{ij}|\geq \epsilon_0\sqrt{n_2} ) \rightarrow 0 
 \end{eqnarray*}
\item[\emph{Assumption [B]}]  The sample size $n_1, n_2$  and the dimension $p$ increase to infinity in such a large dimensional limiting  scheme  that 
\begin{equation}
  y_{n_1}=\frac{p}{n_1} \rightarrow y_1 \in (0, +\infty),  \quad
  y_{n_2}=\frac{p}{n_2} \rightarrow y_2
  \in (0, 1).\label{limitscheme}
\end{equation}
\end{description}

 Let $F_n^{\bV_n}$ denote the empirical spectral distribution(ESD) of the matrix $\bV_n$. Under the assumptions above, 
the ESD $F_n^{\bV_n}$ almost surely  converges  to the
  LSD (limiting spectral distribution) $F_{y_1, y_2}$ with the   density function represented as 
\begin{equation}
  \displaystyle{\ell(x)}=\left\{
  \begin{array}{ll} & \displaystyle{\frac {(1-y_{2})\sqrt{(b'-x)(x-a')}}{2\pi x(y_{1}+y_{2}x)},
      \quad  ~a' \leq x \leq b',}\\[6mm]
    &\displaystyle{0, \quad\quad \quad\quad \mbox{otherwise},}
  \end{array}
  \right.\label{LSDden}
\end{equation}
and has a point mass $1-\frac{1}{y_1} $ at the origin  if $y_1>1$, 
where $h'=\sqrt{y_{1}+y_{2}-y_{1}y_{2}} $
\[
a'=\left( \frac  {1-h'}{ 1-y_{2}}\right)^2,\qquad
b'=\left( \frac  {1+h'}{ 1-y_{2} }\right)^2.
\]
See  p.72 of \citet{bookBS10}.  We use $F_{y_{n_1}, y_{n_2}}$ to mark  
an analog representation of $F_{y_1, y_2}$ by substituting  the index
$y_{n_1}, y_{n_2}  $ for $y_1, y_2$.
Let $\mathcal{A}$   be a set of  functions $f_1,f_2,\cdots$, which are  analytic in an open region in the complex plane containing  the support of the continuous part of the LSD $F_{y_1, y_2}$ defined in (\ref{LSDden}).
A linear spectral statistic (LSS) of the random matrix ${\bV_n}$  is  expressed as
\[ \int f (x)\md F_n^{\bV_n}(x)=
\frac{1}{p}\sum\limits_{i=1}^p f(\lambda_i^{\bV_n}), ~~~~~~\quad f \in  \mathcal{A},
\]
where $\left(\lambda_i^{\bV_n}\right)$ are the real eigenvalues of  the $p
\times p$ square matrix $\bV_n$. Then based on  the empirical process
$G_n : = \{G_n(f)\}$ indexed by $\mathcal{A}$ ,
\begin{equation}
G_n(f)= p\cdot \int_{-\infty}^{+\infty} f(x)\left[F^{\bV_n}_n-
F_{y_{n_1}, y_{n_2}}\right] (\md x), ~~~~~~\quad f \in  \mathcal{A},\label{Gdef}
\end{equation}
the CLT for LSS of large dimensional $F$-matrices (Theorem 3.2 in \citet{Zheng})  is provided as following, which will play a fundamental role in next derivations.

Let
\[\kappa=
\left\{
\begin{array}{cc}
 2, & \mb{if the~}  \bxi,  \bbeta- \mb{variables are real,\quad\quad} \\
  1,& \mb{if the~}  \bxi, \bbeta- \mb{variables are complex.} 
\end{array}
\right.
\]
\begin{Lemma}   [Theorem 3.2 in \citet{Zheng}] \label{T2.1}
Assume that\\[-0.25in]
\begin{enumerate}
\item  Assumptions [A]-[B] are satisfied;
\item   For any positive integers $n_1, n_2$, 
 $ \bxi\!=\!(\bxi_{\cdot 1}, \!\cdots,\!\bxi_{\cdot n_1}) $  and  $ \bbeta \!=\! (\bbeta_{\cdot
  1}, \!\cdots,\!\bbeta_{\cdot n_2}) $ can be thought as two independent samples of 
  a $p$-dimensional  observations, where  $\bxi_{\cdot i} = (\xi_{1i}, \xi_{2i},\cdots ,
\xi_{pi})'$ and $\bbeta_{\cdot j} = (\eta_{1j}, \eta_{2j}, \cdots,
\eta_{pj})'$.  For all $i, j,k$,  $\mb{E}\xi_{ki}=\mb{E}\eta_{kj}=0$, 
$ \mb{E}{|\xi_{ki}|^2}=\mb{E}{|\eta_{kj}|^2}=\kappa-1$,  
 $\mb{E}{|\xi_{ki}|}^4 =\beta_x+\kappa-1< \infty$   and
$\mb{E}{|\eta_{kj}|}^4 =\beta_y+\kappa-1< \infty$,
where $\beta_x$ and  $\beta_y$ are contains  concerned with the 4-th moments.\\[-0.25in]
\end{enumerate}
Let  $ f_1, \cdots ,f_s \in \mathcal{A}$, then the
  random vector $ \left(G_n(f_1), \cdots,G_n(f_s)\right) $ weakly
  converges to a $s$-dimensional Gaussian vector  with the mean vector
 \begin{eqnarray}
&&\mu(f_j)=\frac{\kappa-1}{4\pi i} \! \oint f_j(z) \mb{d} 
\log \!\left(\!\frac{(1-y_2)m_0^2(z)+2m_0(z)\!+\!1\!-\!y_1}{(1-y_2)m_0^2(z)+2m_0(z)+1}\!\right) \label{mean1}\\
&&+\frac{\kappa-1}{4\pi i} \oint f_j(z) \md 
\log\left(1-y_2m_0^2(z)(1+m_0(z))^{-2}\right)\\ \label{mean2}
&&+\frac{\beta_x y_1 }{2 \pi i} \oint f_j(z)\left(1+m_0(z)\right)^{-3} \md m_0(z)\label{mean3}\\
&&+\frac{\beta_y  }{4 \pi i} \oint f_j(z)
\left(1-\frac{y_2m_0^2(z)}{(1+m_0(z))^{2}}\right) \md \log
\left(1-\frac{y_2m_0^2(z)}{(1+m_0(z))^{2}}\right) \label{mean4}
\end{eqnarray}
and covariance
function
\begin{eqnarray}
&&\upsilon\left(f_j,
f_\ell\right)=-\frac{\kappa}{4\pi^2}\oint \!\oint\frac{f_j(z_1)f_\ell(z_2)}{\left(m_0(z_1)-m_0(z_2)\right)^2}
\md m_0(z_1)\md m_0(z_2)    \label{var1}\\
&&-\!\frac{\beta_x y_1+\beta_y y_2}{4\pi^2}\!\oint \!\oint \!\frac{f_j(z_1)f_\ell(z_2)}{(1+m_0(z_1))^2(1+m_0(z_2))^2}
\! \md m_0(z_1)\! \md m_0(z_2)    \label{var2}
\end{eqnarray}
where  $ j,\ell \in \{1, \cdots,
s\}$,  $m_0(z)=\underline{m}_{y_2}(-\underline{m}(z))$.
Here $\underline{m}(z)$ is  the
Stieltjes Transform of ~ $\underline{F}_{y_1,y_2}\equiv (1-y_1)I_{[0,  \infty)}+y_1F_{y_1,y_2}$ and  $\underline{m}_{y_2}(z)$ is  the
Stieltjes Transform of ~ $\underline{F}_{y_2}\equiv (1-y_2)I_{[0,
    \infty)}+y_2F_{y_2}$, where $F_{y_2}$
 is the LSD of the matrix $\mS_2$.     The contours all contain the support of $F_{y_1,y_2}$ and  non overlapping  in both (\ref{var1})  and    (\ref{var2}).  
  \end{Lemma}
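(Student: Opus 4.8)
The plan is to follow the Stieltjes-transform / martingale-CLT strategy of Bai--Silverstein (see \citet{bookBS10}), adapted to the $F$-matrix $\bV_n=\mS_1\mS_2^{-1}$ of (\ref{Fmatrix}) by treating the randomness of $\mS_2$ in a second layer. First I would carry out the standard truncation and renormalization: by \emph{Assumption [A]} the entries $\xi_{ki},\eta_{kj}$ may be replaced by versions bounded by $\epsilon_n\sqrt{n_1}$, $\epsilon_n\sqrt{n_2}$ (with $\epsilon_n\downarrow0$ slowly) and then recentered and rescaled so that the first, second and fourth moments are matched up to errors that stay negligible after multiplication by $p$; a Bernstein-type estimate shows $G_n(f)$ in (\ref{Gdef}) is unaffected in the limit. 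Next, writing $m_n$ for the Stieltjes transform of $F_n^{\bV_n}$ and $m_n^0$ for that of $F_{y_{n_1},y_{n_2}}$, and choosing a contour $\mathcal{C}$ enclosing the support of $F_{y_1,y_2}$ and bounded away from $0$, I would use the representation
\[
G_n(f_j)=-\frac{1}{2\pi i}\oint_{\mathcal{C}} f_j(z)\,M_n(z)\,\md z,\qquad M_n(z):=p\bigl(m_n(z)-m_n^0(z)\bigr),
\]
which reduces the statement to the weak convergence of the random analytic process $\{M_n(z):z\in\mathcal{C}\}$ to a Gaussian process $M(\cdot)$; the mean $\mu(f_j)$ and covariance $\upsilon(f_j,f_\ell)$ are then read off from $\mb{E}\,M$ and from the covariance of $M(z_1),M(z_2)$ via the same contour integrals, and the explicit forms (\ref{mean1})--(\ref{var2}) follow from the substitution $z\mapsto m_0(z)=\underline{m}_{y_2}(-\underline{m}(z))$.

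For the weak convergence I would split $M_n=M_n^{(1)}+M_n^{(2)}$ with $M_n^{(1)}(z)=p\bigl(m_n(z)-\mb{E}\,m_n(z)\bigr)$ and $M_n^{(2)}(z)=p\bigl(\mb{E}\,m_n(z)-m_n^0(z)\bigr)$. The random part $M_n^{(1)}$ is treated by a martingale CLT: since the nonzero eigenvalues of $\bV_n$ coincide with those of $\mS_2^{-1/2}\mS_1\mS_2^{-1/2}=\frac{1}{n_1}\sum_i(\mS_2^{-1/2}\bxi_{\cdot i})(\mS_2^{-1/2}\bxi_{\cdot i})^*$, one may view $\bV_n$ as a sample covariance matrix with the \emph{random} population matrix $T_n:=\mS_2^{-1}$; I would then form the martingale difference sequence obtained by successively conditioning on the columns $\bxi_{\cdot 1},\dots,\bxi_{\cdot n_1}$ and then $\bbeta_{\cdot 1},\dots,\bbeta_{\cdot n_2}$ and verify the Lindeberg/Lyapunov conditions, which the truncation makes available. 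Evaluating the sums of conditional expectations of products of martingale differences through resolvent identities and the rank-one (Sherman--Morrison) perturbation formula produces the limiting covariance: the $\bxi$-columns yield the universal term (\ref{var1}) together with the $\beta_x y_1$ part of (\ref{var2}), while the extra fluctuation of $T_n$ coming from the $\bbeta$-columns contributes the $\beta_y y_2$ part of (\ref{var2}).

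The mean part $M_n^{(2)}$ requires a sharp expansion $\mb{E}\,m_n(z)=m_n^0(z)-\tfrac1p\,b(z)+o(1/p)$, uniformly on $\mathcal{C}$, where $b(z)$ is an explicit rational function of $m_0(z),y_1,y_2$ carrying the $\kappa-1$, $\beta_x$ and $\beta_y$ coefficients; this is obtained by inserting the deterministic equivalents for $\mS_1T_n$ and for $\mS_2$, bounding $\mb{Var}(m_n)=O(p^{-2})$ (e.g. by an interpolation/Poincar\'e-type inequality under the truncated moments), and iterating the Sherman--Morrison identity with careful bookkeeping of the $O(p^{-1})$ contributions. Tightness of $\{M_n(z)\}$ on $\mathcal{C}$ I would get from the bound $\mb{E}\,|M_n(z_1)-M_n(z_2)|^2\le C\,|z_1-z_2|^2$ for $z$ with $\Im z$ bounded away from $0$, plus the no-eigenvalue-outside-the-support property of $\bV_n$ to handle the parts of $\mathcal{C}$ approaching the edges $a',b'$. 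Finite-dimensional convergence together with tightness gives $M_n\Rightarrow M$, and feeding this into the contour integral and changing variables $z\mapsto m_0(z)$ produces precisely (\ref{mean1})--(\ref{mean4}) and (\ref{var1})--(\ref{var2}).

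The step I expect to be the main obstacle is this ``two-layer'' randomness. Unlike the ordinary sample-covariance LSS-CLT, here both $\mS_1$ and $\mS_2$ fluctuate, so one cannot simply apply a known CLT conditionally on $\mS_2$; one must instead propagate the CLT-level fluctuations of $F_n^{\mS_2}$ through the linear spectral statistics of $\bV_n$, which is exactly what generates the $\beta_y y_2$ term of $\upsilon$ and part of the mean correction, and keeping this expansion consistent to order $o(1/p)$ while retaining only the finite-$4$th-moment hypothesis is the delicate point. (In the present paper this lemma is quoted directly from Theorem~3.2 of \citet{Zheng}, so only the reduction of our test statistics to LSS of an $F$-matrix needs to be verified, not the CLT itself.)
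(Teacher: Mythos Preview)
The paper does not prove this lemma at all: it is simply imported verbatim as Theorem~3.2 of \citet{Zheng}, with no argument given beyond the citation, and you correctly flag this in your final parenthetical remark. Your sketch is a reasonable outline of the Bai--Silverstein/Zheng methodology that underlies the cited result, but there is nothing in the present paper to compare it against, since the author treats Lemma~\ref{T2.1} purely as a black box and only \emph{applies} it (together with the contour-integral simplification of Lemma~\ref{Lem2}) to compute $\mu(f)$ and $\upsilon(f)$ for $f(x)=x$ in Theorem~\ref{CLT}.
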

  The expression of the asymptotic mean and covariance in Lemma \ref{T2.1} is complicated to  figure it out.  So  further steps were given to help the evaluation of the asymptotic  mean and covariance in the Corollary~3.2 in  \citet{Zheng}.  However,  the result  provided in  the Corollary~3.2 in  \citet{Zheng} is not correct, I think it is a typo mistake. In order to obtain an accurate  and simplified  form for computing  the asymptotic mean and covariance, we reviewed 
the Corollary~3.2 in  \citet{Zheng}, and give the  corrected result   in the  following Lemma~\ref{Lem2}, which is proved in the \ref{A2}.

\begin{Lemma}\label{Lem2}
Under the assumptions of Lemma \ref{T2.1}, the asymptotic means and covariances of the limiting random
vector can be computed as follows
  \begin{eqnarray}
&&\!\mu(f_j)\!=\!\lim\limits_{\tau \downarrow
1}\!\frac{\kappa-1}{4\pi i} \!\oint_{|\xi|=1}\!
f_j \! \left(\!\frac{1\!+\!h'^2\!+\!2h'\mb{Re}(\xi)}{(1\!-\!y_2)^2}\!\right)\!\left[\!\frac{1}{\xi-{1\over \tau}}\!+\!\frac{1}{\xi+{1\over
\tau}}\!-\!\frac{2}{\xi+{y_2\over
{h' \tau}}} \!\right]\!\md\xi\label{E1}\nonumber \\
&&\\
&& +\frac{\beta_x\cdot y_1(1-y_2)^2}{2\pi i \cdot
h'^2}\oint_{|\xi|=1}f_j\left(\!\frac{1+h'^2+2h'\mb{Re}(\xi)}{(1-y_2)^2}\!\right)\frac{1}{(\xi+\frac{y_2}{h'})^3} \md \xi\label{E1betax}\\
&&+ \frac{\beta_y\cdot y_2(1-y_2)}{2 \pi i \cdot
h'}\oint_{|\xi|=1}f_j\left(\frac{1+h'^2+2h'\mb{Re}(\xi)}{(1-y_2)^2}\right)\frac{\xi +
\frac{1}{h'}}{(\xi+\frac{y_2}{h'})^3}\md\xi ,\label{E1betay}
\end{eqnarray}
where $ j=1, \cdots, s$, and covariance
function
\begin{eqnarray}
&&\upsilon\!\left(f_j,
f_\ell\right)\!=\!-\lim\limits_{ \tau \downarrow
    1}   \!\frac{\kappa}{4\pi^2}\!\oint_{|\xi_1\!|\!=\!1}
\!\oint_{|\xi_2\!|\!=\!1}\frac{f_j\!\left(\!\frac{1\!+\!h'^2\!+\!2h'\mb{Re}(\xi_1\!)}{(1-y_2)^2}\!\right)
f_\ell \!\left(\!\frac{1\!+\!h'^2\!+\!2h'\mb{Re}(\xi_2\!)}{(1-y_2)^2}\right)}{(\xi_1-\tau\xi_2)^2}
\!\md\xi_1\!\md\xi_2,\nonumber\\
&&\label{xicov1}\\
&& - \frac{(\beta_x y_1 \!+ \!\beta_y  y_2 \!)( \!1 \!- \!y_2 \!)^2}{4  \pi^2 \!h'^2}
 \!\oint_{|\xi_1\!| \!=\!1}  \! \frac{f_j  \!\left( \!\frac{ \!1 \!+ \!h'^2 \!+ \!2h'\mb{Re}(\xi_1)}{(1 \!- \!y_2)^2}\right)}{(\xi_1 \!+ \!\frac{y_2}{h'})^2}  \!\md\xi_1
 \!\oint_{|\xi_2\!| \!=\!1} \! \frac{f_j  \!\left( \!\frac{ \!1 \!+ \!h'^2 \!+ \!2h'\mb{Re}(\xi_2)}{(1 \!- \!y_2)^2}\right)}{(\xi_2 \!+ \!\frac{y_2}{h'})^2} \!\md\xi_2\nonumber\\
 &&\label{cov1betax}
\end{eqnarray}
where  $ j,\ell \in \{1, \cdots,
s\}$, "$\mb{Re}$" represents the real part of $\xi$ and $\tau \downarrow
1$ means that " $\tau$ approaches 1 from above'.
\end{Lemma}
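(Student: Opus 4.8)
The plan is to reduce the abstract $z$-plane contour integrals of Lemma~\ref{T2.1}, which are written through the auxiliary function $m_0(z)$, to explicit integrals over the unit circle by means of one uniformizing substitution, and then to read off the formulas (\ref{E1betax})--(\ref{cov1betax}) term by term.

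First I would eliminate the two nested Stieltjes transforms. Using the defining (Silverstein-type) equations for the $F$-companion transform $\underline m(z)$ and for the Marchenko--Pastur companion transform $\underline m_{y_2}(z)$, together with $m_0(z)=\underline m_{y_2}(-\underline m(z))$, and clearing denominators, one arrives at the single functional equation
\[
z=-\frac{m_0\bigl(m_0+1-y_1\bigr)}{(1-y_2)\,m_0+1},\qquad\mb{equivalently}\qquad m_0^2+\bigl[(1-y_2)z+1-y_1\bigr]m_0+z=0 .
\]
Solving this quadratic and selecting the branch with $m_0(z)\to\infty$ as $z\to\infty$, one checks that
\[
z=z(\xi)=\frac{(1+h'\xi)(\xi+h')}{(1-y_2)^2\,\xi},\qquad m_0\bigl(z(\xi)\bigr)=-\frac{1+h'\xi}{1-y_2}
\]
is an exact parametrization; moreover $z(\xi)=\bigl(1+h'^2+2h'\,\mb{Re}(\xi)\bigr)/(1-y_2)^2$ on $|\xi|=1$, which is why $f_j$ enters the target formulas with that argument, and $\xi\mapsto z(\xi)$ carries $|\xi|=1$ onto the support $[a',b']$ and $|\xi|=\tau>1$ onto a contour enclosing it, which is the origin of the limit $\tau\downarrow1$.

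With this in hand I would rationalize every ingredient of Lemma~\ref{T2.1}:
\[
1+m_0=-\frac{h'\bigl(\xi+\tfrac{y_2}{h'}\bigr)}{1-y_2},\qquad \md m_0=-\frac{h'}{1-y_2}\,\md\xi,\qquad (1-y_2)m_0^2+2m_0+1-y_1=\frac{h'^2(\xi^2-1)}{1-y_2},
\]
and likewise $(1-y_2)m_0^2+2m_0+1=h'^2(\xi^2-y_2/h'^2)/(1-y_2)$ and $1-y_2m_0^2/(1+m_0)^2=(1-y_2)(\xi^2-y_2/h'^2)/(\xi+y_2/h')^2$. Since $h'^2-y_2^2=(1-y_2)(y_1+y_2)>0$, the pole $\xi=-y_2/h'$ lies strictly inside $|\xi|=1$ while the zeros $\xi=\pm1$ lie exactly on it, which is the source of the shifted poles $\pm1/\tau$ and $-y_2/(h'\tau)$ in the statement. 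Now substituting termwise: in $\mu(f_j)$ the two $\kappa-1$ terms combine, the constants cancelling inside the two logarithms and the spurious poles at $\pm\sqrt{y_2}/h'$ cancelling between them, to leave the bracket $\bigl[\tfrac1{\xi-1/\tau}+\tfrac1{\xi+1/\tau}-\tfrac2{\xi+y_2/(h'\tau)}\bigr]\md\xi$, i.e. the first line of Lemma~\ref{Lem2}; the $\beta_x$ term turns directly into a constant multiple of $(\xi+y_2/h')^{-3}\md\xi$, i.e. (\ref{E1betax}); and since $g\,\md\log g=\md g$, the $\beta_y$ term turns into a constant multiple of $(\xi+1/h')(\xi+y_2/h')^{-3}\md\xi$, i.e. (\ref{E1betay}). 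For the covariance, $\md m_0(z_1)\,\md m_0(z_2)/(m_0(z_1)-m_0(z_2))^2$ is invariant under the common substitution and equals $\md\xi_1\,\md\xi_2/(\xi_1-\xi_2)^2$; taking the two contours at radii $\tau$ and $1$ so that they stay disjoint produces the $(\xi_1-\tau\xi_2)^{-2}$ form of the first covariance line, whereas $(1+m_0(z_1))^{-2}(1+m_0(z_2))^{-2}\,\md m_0(z_1)\,\md m_0(z_2)$ factorizes into a product of single integrals, each proportional to $(\xi+y_2/h')^{-2}\md\xi$, giving (\ref{cov1betax}). Collecting constants and comparing with Corollary~3.2 of \citet{Zheng} then isolates the typo; the full computation is given in \ref{A2}.

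The main obstacle is entirely in the first stage: deriving the closed form $m_0(z(\xi))=-(1+h'\xi)/(1-y_2)$ from the nested composition of the two companion Stieltjes transforms, and fixing the correct branch of the quadratic through the behaviour at $z\to\infty$. Once that identity is secured, everything downstream --- the cancellation of the spurious $\pm\sqrt{y_2}/h'$ zeros between the two $\kappa-1$ logarithms, the handling of the limit $\tau\downarrow1$ for the poles $\xi=\pm1$ that sit on the integration circle, and the routine expanding of logarithms, differentiating, and bookkeeping of constants --- is mechanical; but it is precisely in that bookkeeping that Corollary~3.2 of \citet{Zheng} slips, so it must be carried out with care.
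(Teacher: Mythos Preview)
Your proposal is correct and follows essentially the same route as the paper's Appendix~\ref{A2}: both derive the relation $z=-m_0(m_0+1-y_1)/\bigl[(1-y_2)m_0+1\bigr]$ by composing the two companion Stieltjes--transform equations, then introduce the uniformizing substitution $m_0=-(1+h'\tau\xi)/(1-y_2)$ with $|\xi|=1$, $\tau\downarrow1$, and substitute term by term into the integrals of Lemma~\ref{T2.1}. The only cosmetic difference is that the paper pins down the parametrization by evaluating $m_0$ at the two edge points $a',b'$ rather than by your quadratic--plus--branch argument, and it does not spell out the intermediate factorizations (such as $(1-y_2)m_0^2+2m_0+1-y_1=h'^2(\xi^2-1)/(1-y_2)$ or the cancellation of the $\pm\sqrt{y_2}/h'$ poles) that you record; your write-up is in fact more explicit on those points than the paper's.
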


 \section{ The Proposed Testing Statistics } \label{New}
 Based on  the CLT for the  $F$-matrices in  Lemma~\ref{T2.1}, a corrected  scaling for the classical test statistic is established. 
 Recall
that
\[T^2= \mb{tr}(\bF), \quad \bF=\frac{n_1(\overline{\by}-\bmu_0)(\overline{\by}-\bmu_0)'}{\mS}.\]
Under the null hypothesis $H_0$,  we have 
\[n_1(\overline{\by}-\bmu_0)(\overline{\by}-\bmu_0)' \sim W_p(1, \bS), \quad (n_1-1)\mS \sim W_p( n_1-1, \bS)\]   
  and they are independent with each other.
According to  the definition of $F$-matrix, 
standardization of the entries cannot effect on the values of   
 $\bF$, because both the numerator and denominator are  already centralized and have the same covariance 
parameter $\bS$.  Consequently,  $\bF$ is exactly distributed as the $F$-matrix
$\bV_n$ with  freedom degree $(1,n_1-1)$,  where in addition they have the same limiting spectral distributions. Thus,
our proposed   test statistic 
is given by Lemma \ref{T2.1} under the large dimensional  setting 
$p/1=\gamma_p \rightarrow \gamma_1 \in (0,+\infty) ~  \mb{and}~
p/(n_1-1)=\gamma_{n_1} \rightarrow \gamma_2 \in (0,1)$, which means that our method is valid for moderate high dimensionality. However, It  still works for ultra high dimensional data if there is a more larger sample size.

\begin{theorem} \label{CLT}
  Assuming that  the conditions of Lemma \ref{T2.1}  hold under
  $H_0$ in  (\ref{H1}), $T^2$ is 
  defined as in (\ref{T2}) and $
  f(x)=x$. Let $p/1=\gamma_p \rightarrow \gamma_1 \in (0,+\infty)$  and
$p/(n_1-1)=\gamma_{n_1} \rightarrow \gamma_2 \in (0,1)$.
  Then, under $H_0$ 
    \begin{equation}
 T_{ours}  =\upsilon(f)^{-\frac{1}{2}}\left[
    T^2-p \cdot d(\gamma_{n_1})-
      \mu(f)\right] \Rightarrow N \left( 0, 1\right).\label{ours}
  \end{equation}
  where $d(\gamma_{n_1})$ is derived in (\ref{limit}), and  $\mu(f),\upsilon(f)  $  are
  depicted as (\ref{testE}) and (\ref{testVar}), respectively.
\end{theorem}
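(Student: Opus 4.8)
The plan is to recognize $T^2=\mb{tr}(\bF)$ as $p$ times a linear spectral statistic of a standard $F$-matrix and then read off the limit law from Lemma~\ref{T2.1} with the single test function $f(x)=x$.

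The distributional reduction is essentially already recorded in the discussion preceding the theorem: under $H_0$, $n_1(\overline{\by}-\bmu_0)(\overline{\by}-\bmu_0)'\sim W_p(1,\bS)$ and $(n_1-1)\mS\sim W_p(n_1-1,\bS)$ are independent, so writing $\bS=\bS^{1/2}\bS^{1/2}$ and extracting $\bS^{1/2}$ from the numerator and $\bS^{-1/2}$ from $\mS^{-1}$, the $\bS^{1/2}$ factors cancel and the $(n_1-1)^{-1}$ normalization of $\mS$ matches that of $\mS_2$ in (\ref{Fmatrix}); hence $\bF$ is similar to the $F$-matrix $\bV_n=\mS_1\mS_2^{-1}$ of (\ref{Fmatrix}) with freedom degrees $(1,n_1-1)$ and has the same spectrum. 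The two ratios entering Assumption~[B] are then $\gamma_p=p/1$ and $\gamma_{n_1}=p/(n_1-1)$, $f(x)=x$ is entire and hence belongs to $\mathcal A$, and the conditions of Lemma~\ref{T2.1} --- assumed to hold under $H_0$ --- apply to $\bV_n$.

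Next I would rewrite the centered statistic as $G_n(f)$. Since $T^2=\sum_{i=1}^p\lambda_i^{\bF}=p\int x\,\md F_n^{\bV_n}(x)$, subtracting the finite-sample centering appearing in (\ref{Gdef}) gives
\[
T^2-p\,d(\gamma_{n_1})=p\int x\,\md F_n^{\bV_n}(x)-p\int x\,\md F_{y_{n_1},y_{n_2}}(x)=G_n(f),
\]
once one verifies $\int x\,\md F_{y_{n_1},y_{n_2}}(x)=d(\gamma_{n_1})$. This is exactly the first-moment computation of the $F$-type density $\ell$ carried out in (\ref{limitd}) and (\ref{limit}): the integral of $x$ against $\ell$ equals $1/(1-y_2)$ independently of $y_1$, so here it equals $1/(1-\gamma_{n_1})=d(\gamma_{n_1})$. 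Centering by the finite-sample quantity $p\,d(\gamma_{n_1})$ rather than $p\,d(\gamma_2)$ is essential, because $p\big(\int x\,\md F_{y_{n_1},y_{n_2}}(x)-\int x\,\md F_{\gamma_1,\gamma_2}(x)\big)$ is $O(1)$, not $o(1)$, and this bias is precisely what $p\,d(\gamma_{n_1})$ in (\ref{ours}) absorbs. Then Lemma~\ref{T2.1} with $s=1$ gives $G_n(f)\Rightarrow N(\mu(f),\upsilon(f))$, and since $\upsilon(f)>0$, Slutsky's theorem yields $\upsilon(f)^{-1/2}\big[G_n(f)-\mu(f)\big]\Rightarrow N(0,1)$, which is (\ref{ours}).

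What remains is to make the constants explicit at $f(x)=x$: using the simplified unit-circle representations of Lemma~\ref{Lem2}, I would substitute $x=\big(1+h'^2+2h'\mb{Re}(\xi)\big)/(1-y_2)^2$ into (\ref{E1})--(\ref{E1betay}) and (\ref{xicov1})--(\ref{cov1betax}), so that each integrand becomes a rational function of $\xi$ (resp.\ of $\xi_1,\xi_2$), reduce the contour integrals to residue sums, and let $\tau\downarrow1$, obtaining the closed forms (\ref{testE}) and (\ref{testVar}). I expect the obstacles to be twofold. The genuinely delicate structural point is the degenerate regime: the numerator behaves like a single $N_p(\bo,\bS)$ vector, so $\bF$ is of rank one and the numerator ratio $\gamma_p=p/1$ is large; one must therefore check that the hypotheses and the conclusion of Lemma~\ref{T2.1} continue to hold in this boundary case, the entire LSS being carried by the single nonzero eigenvalue $T^2$ together with the $p-1$ null eigenvalues at the origin. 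The main computational obstacle is the residue bookkeeping for the mean and covariance contour integrals --- especially the $\beta_x,\beta_y$ fourth-moment terms and the $\tau\downarrow1$ limits --- needed to reduce Lemma~\ref{Lem2} to the explicit expressions for $d(\gamma_{n_1})$, $\mu(f)$ and $\upsilon(f)$ recorded in (\ref{limit}), (\ref{testE}) and (\ref{testVar}).
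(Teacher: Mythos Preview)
Your proposal is correct and follows essentially the same route as the paper: identify $T^2=\mb{tr}(\bF)$ as $p$ times the linear spectral statistic $\int x\,\md F_n^{\bF}(x)$ of an $F$-matrix with degrees $(1,n_1-1)$, center by $p\,d(\gamma_{n_1})$ to obtain $G_n(f)$, invoke Lemma~\ref{T2.1} for the Gaussian limit, and evaluate $\mu(f),\upsilon(f)$ via the residue formulas of Lemma~\ref{Lem2}. The paper proceeds exactly this way (with the contour computations deferred to Appendices~\ref{A3}--\ref{A4}); your additional remark about the degenerate rank-one numerator and $\gamma_p=p/1$ is a legitimate caveat that the paper does not address beyond asserting that $\bF$ is an admissible $F$-matrix.
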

\begin{proof}
According to the  definition in (\ref{T2}), we have
  \begin{eqnarray*}
   T^2  &=&\mb{tr}(\bF)  =\sum\limits_{i=1}^p \lambda^{\bF}_i=p \cdot \int x  dF_{n}^{\bF}(x)
  \end{eqnarray*}
  where $F_n^{\bF}(x)$ is the ESD of the matrix $\bF$ in  (\ref{F}).

  Since $\bF$ is exactly distributed as the $F$-matrix
$\bV_n$ with  freedom degree $n=(1,n_1-1)$,   $\bF$  has the same limiting spectral distribution 
  with the $F$-matrix $\bV_n$. Furthermore, the unbiased estimator of the covariance matrix of $\by_k,k=1,\cdots, n_1$ is adopted for  the denominator $\mS$ in $\bF$,  which is  the only  item subtracting sample mean. So  it is equivalent  to apply the CLT for LSS of large dimensional 
  $F$-matrix to either $\bF$ or $\bV_n$ with  freedom degree $n=(1,n_1-1)$.
  Then define $f(x)=x$  and 
   \begin{equation}
   G_n(f)=p \cdot \int f(x) d\left(F_n^{\bF}(x)-F_{\gamma_{p},
      \gamma_{n_1}}(x)\right),\label{ESD-pLSD}
       \end{equation}
        where $F_{\gamma_{p},
      \gamma_{n_1}} $   is analogous  to LSD of the matrix $\bF$,  which has
  a density in (\ref{LSDden}) but with $\gamma_{p},
      \gamma_{n_1} $ instead of $y_k,
  k=1,2.$, respectively. Consequently,
 $F_{\gamma_{p},
      \gamma_{n_1}}(f)=\int f(x)dF_{\gamma_{p},
      \gamma_{n_1}}(x) $ is exactly analogous to   the $d(\gamma_2)$ calculated in (\ref{limitd}) by substituting $\gamma_{n_1}$ for $\gamma_2$, i.e.
      \begin{equation}
      F_{\gamma_{p},
      \gamma_{n_1}}(f)=\int f(x)dF_{\gamma_{p},
      \gamma_{n_1}}(x) =\frac{1}{1-\gamma_{n_1}} \equiv d(\gamma_{n_1})
      \label{limit}
\end{equation}

    By Lemma~\ref{T2.1}, $ G_n(f)$  weakly converges to a Gaussian
  vector with mean

  \begin{equation}
    \mu(f)= \frac{\gamma_2}{(1-\gamma_2)^2}+\frac{\beta_y\gamma_2}{1-\gamma_2}\label{testE}
  \end{equation}
  and variance
  \begin{equation}
    \upsilon(f)=\frac{2h^2}{(1-\gamma_2)^4}+\frac{\beta_x\gamma_1+\beta_y\gamma_2}{(1-\gamma_2)^2}\label{testVar}
  \end{equation}
 where $ h=\sqrt{\gamma_1+\gamma_2-\gamma_1\gamma_2}$,  $\beta_x$ and $\beta_y$ here are the Kurtosis  of the standardized 
  $\bar \by$ and $\by_i$,respectively, which can be calculated from (\ref{yx}). But it is complicated and we simulated them in the simulation study.  (\ref{testE}) and (\ref{testVar}) are calculated by Lemma~\ref{T2.1} in the   \ref{A3} and \ref{A4}. 
From
  \begin{eqnarray}
   T^2&=&p \cdot \int f(x)dF_n^{\bF}(x)\nonumber\\
   &=& p \cdot \int f(x) d\left(F_n^{\bF}(x)-F_{\gamma_{p},
      \gamma_{n_1}}(x)\right) +p \cdot F_{\gamma_{p},
      \gamma_{n_1}}(f),\\
      &=& G_n(f)+p \cdot d( \gamma_{n_1})\nonumber
  \end{eqnarray}
 we get
  \begin{eqnarray}
   G_n(f)= T^2-p\cdot d( \gamma_{n_1})    &\Rightarrow & N\left(\mu(f), \upsilon(f)\right).\label{2connec}
  \end{eqnarray}
Therefore,   
  $$
  T_{ours}=\upsilon(f)^{-\frac{1}{2}}\left[ T^2-p
    \cdot  d(\gamma_{n_1})- \mu(f)\right] \Rightarrow N \left( 0,
  1\right).
  $$
\end{proof}

The  test statistic we proposed for testing
(\ref{H1})
is  based on the likelihood ratio  test  statistic $T^2$ the and its asymptotic
distribution is  derived in the theorem  above.
However, it is worth noticing that in the above proof, we used the Gaussian
assumption for entry variables to fit $F$-matrix definition, but
Lemma~\ref{T2.1} does not need this Gaussian assumption. Therefore,
 the asymptotic distribution for
 in Theorem~\ref{CLT} could be applied  more
generally to non-Gaussian  variables. The simulations could certainly make out a case for this point of view.

Next, we consider  some special cases of  the test hypothesis (\ref{H1}), and derive their test statistics and asymptotic distributions in some corollaries. 
First,  we focus on the testing the equality of two population mean vectors with unequal covariance matrices. It is  also well known as the multivariate Behrens-Fisher problem.That is 
\begin{equation}
H_0: \bmu_1=\bmu_2 \quad  \mb{v.s.} \quad H_1:  \bmu_1 \neq \bmu_2,
\end{equation}
 which is a special case of  the hypothesis (\ref{H1}) with $q=2, \beta_1=1, \beta_2=-1$  and  $\bmu_0=\bo$.  
Then define 
\[
\by_k=\bx _{1 k}-\sqrt{\frac{n_1}{n_2}} \bx _{2
k}+\frac{1}{\sqrt{n_1n_2}}\sum\limits_{l=1}^{n_1} \bx _{2 l}-
\frac{1}{n_2}\sum\limits_{m=1}^{n_2}\bx _{2 m}.
\]
and $n_1< n_2$ without loss of generality.
Thus, we also have 
$\{\by_k, k=1,\cdots, n_1\}$  are independent  and 
\[\bmu \equiv \mb{E}\by_k=\bmu_1-\bmu_2\]
and
\[
\bS\equiv \mb{E}(\by_k-\mb{E}\by_k)(\by_l-\mb{E}\by_l)'=\delta_{kl}\left(\bS_1+\frac{n_1}{n_2}\bS_2\right),
\]
where $\delta_{kl}$ is a  Kronecker's delta function.  So it is equivalent to test 

\[H_0: \bmu=\bo \quad  \mb{v.s.} \quad H_1:  \bmu \neq  \bo\]
and the classical test statistic is
\begin{equation}
T_{BF}=n_1\overline{\by}'\mS^{-1}\overline{\by},
\end{equation}
where $\mS$ is defined in (\ref{S}) with $q=2$. 
Applying the Theorem~\ref{CLT}, we have the following corollary:
\begin{corollary}
For testing $H_0 : \bmu_1=\bmu_2 $ with unequal covariance matrix $\bS_i, i=1,2.$, under the assumption of  Theorem \ref{CLT},
we have   the conclusion of Theorem \ref{CLT} still holds,
only with the test statistic $T^2$ in (\ref{ours}) is revised by $T_{BF}$.
\end{corollary}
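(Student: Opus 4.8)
The plan is to recognise that the Behrens--Fisher hypothesis $H_0:\bmu_1=\bmu_2$ is precisely the hypothesis (\ref{H1}) with $q=2$, $\beta_1=1$, $\beta_2=-1$ and $\bmu_0=\bo$, and then to check that the $\by_k$ written above for this problem coincides with the general transformation (\ref{yx}) under this choice of parameters; once this is done the corollary becomes an immediate instance of Theorem~\ref{CLT}. First I would substitute $q=2$, $\beta_1=1$, $\beta_2=-1$ into (\ref{yx}): using $\beta_2\sqrt{n_1/n_2}\cdot\frac{1}{n_1}=\frac{1}{\sqrt{n_1n_2}}$ and $\beta_2\sqrt{n_1/n_2}\cdot\frac{1}{\sqrt{n_1n_2}}=\frac{1}{n_2}$, the right-hand side of (\ref{yx}) collapses term by term to
\[
\bx_{1k}-\sqrt{\frac{n_1}{n_2}}\,\bx_{2k}+\frac{1}{\sqrt{n_1n_2}}\sum_{l=1}^{n_1}\bx_{2l}-\frac{1}{n_2}\sum_{m=1}^{n_2}\bx_{2m},
\]
which is exactly the $\by_k$ defined for the Behrens--Fisher problem. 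Hence $\mb{E}\by_k=\bmu_1-\bmu_2=\bmu$ and $\mb{E}(\by_k-\mb{E}\by_k)(\by_l-\mb{E}\by_l)'=\delta_{kl}(\bS_1+\frac{n_1}{n_2}\bS_2)=\delta_{kl}\bS$, which matches the general identifications $\bmu=\sum_i\beta_i\bmu_i$ and $\bS=\sum_i\frac{\beta_i^2n_1}{n_i}\bS_i$; the convention $n_1<n_2$ ensures $n_1$ is the smallest sample size, as required by (\ref{yx}).

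Next I would note that under $H_0$ we have $\bmu=\bo$, so $T_{BF}=n_1\overline{\by}'\mS^{-1}\overline{\by}=\mb{tr}(\bF)$ with $\bF=n_1\overline{\by}\,\overline{\by}'\mS^{-1}$, which is precisely the matrix analysed in Theorem~\ref{CLT}. I would then repeat the argument given in the proof of that theorem: under the Gaussian assumption the $\by_k$ are i.i.d. $N_p(\bo,\bS)$, so $n_1\overline{\by}\,\overline{\by}'\sim W_p(1,\bS)$ and $(n_1-1)\mS\sim W_p(n_1-1,\bS)$ are independent; since standardising the entries does not change $\bF$, the matrix $\bF$ is an $F$-matrix with degrees of freedom $(1,n_1-1)$, and Lemma~\ref{T2.1} applies with $f(x)=x$. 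The centring $p\cdot d(\gamma_{n_1})$ in (\ref{limit}) and the corrections $\mu(f)$, $\upsilon(f)$ in (\ref{testE})--(\ref{testVar}) keep exactly the same functional form; only the fourth-moment parameters $\beta_x,\beta_y$ are now the kurtoses of the standardised $\overline{\by}$ and $\by_k$ built from the Behrens--Fisher transformation. It follows that $T_{BF}-p\cdot d(\gamma_{n_1})\Rightarrow N(\mu(f),\upsilon(f))$, i.e.\ the conclusion of Theorem~\ref{CLT} holds verbatim with $T^2$ replaced by $T_{BF}$.

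I expect the only step requiring an actual computation to be the term-by-term reduction of (\ref{yx}) to the Behrens--Fisher $\by_k$, together with the accompanying verification that $\mb{E}(\by_k-\mb{E}\by_k)(\by_l-\mb{E}\by_l)'=\delta_{kl}\bS$; everything else is a direct appeal to Theorem~\ref{CLT}, so there is no genuine obstacle here. Two minor points I would be careful to mention: Assumptions~[A]--[B] (finite fourth moments, and the ratio scheme $\gamma_p\to\gamma_1$, $\gamma_{n_1}\to\gamma_2$) are automatically inherited because each $\by_k$ is a fixed linear combination of the original observations; and, exactly as for Theorem~\ref{CLT}, Gaussianity is used only to identify $\bF$ with an $F$-matrix, so by Lemma~\ref{T2.1} the limiting law in fact persists for non-Gaussian data satisfying the moment conditions.
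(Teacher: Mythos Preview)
Your proposal is correct and follows exactly the route the paper takes: the paper gives no separate proof for this corollary, but simply specialises (\ref{yx}) to $q=2$, $\beta_1=1$, $\beta_2=-1$, $\bmu_0=\bo$, identifies the resulting $\by_k$, $\bmu$, $\bS$ and $T_{BF}$ with those of the Behrens--Fisher set-up, and then states the result as an immediate application of Theorem~\ref{CLT}. Your term-by-term reduction of (\ref{yx}) and the subsequent appeal to the proof of Theorem~\ref{CLT} is precisely what is intended.
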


For more simplicity,  we assume all of the variables have  the common covariance matrix, that is $\bS_1=\cdots=\bS_q=\bS$. Then for testing the hypothesis (\ref{H1}) with the common covariance matrix assumption, we set
\[
\bar\by=\sum\limits_{i=1}^q \beta_i\bar\bx _i
\quad 
\mS=\frac{1}{\sum\limits_{i=1}^q n_i-q}\sum\limits_{i=1}^q\sum\limits_{k=1}^{n_i}(\bx_{ik}-\overline\bx_i)(\bx_{ik}-\overline\bx_i)'
\]
where $\overline\bx _i=\displaystyle\frac{1}{n_i}\sum\limits_{k=1}^{n_i}\bx _{ik}$.
So the classical likelihood test statistic  is
\begin{equation}
T_{M}=\sum\limits_{i=1}^q\frac{\beta_i^2}{n_i}(\overline{\by}-\bmu_0)'\mS^{-1}(\overline{\by}-\bmu_0),
\end{equation}
Define the matrix
\begin{equation}
\bF_1=\frac{\sum_{i=1}^q\frac{\beta_i^2}{n_i}(\overline{\by}-\bmu_0)(\overline{\by}-\bmu_0)'}{\mS},
\end{equation}
according to  the definition of $F$-matrix, $\bF_1$ is an  $F$-matrix with  freedom degree $(1,\sum\limits_{i=1}^q n_i-q)$. Thus 
the test statistic  $T_M$ can be written as 
\[
T_M=\mb{tr}\left(\sum\limits_{i=1}^q\frac{\beta_i^2}{n_i}(\overline{\by}-\bmu_0)(\overline{\by}-\bmu_0)'\mS^{-1}\right)=\mb{tr}(\bF_1)
\]
Applying the Theorem  \ref{CLT}, the corresponding corollary are  given as below.

\begin{corollary}
For testing $H_0 : \sum\limits_{i=1}^q\beta_i\bmu_i=\bmu_0$ with common covariance matrix $\bS$,  assume that 
$p/1=\gamma_p \rightarrow \gamma_1 \in (0,+\infty)$  and
$p/(\sum\limits_{i=1}^q n_i-q)=\gamma_{n} \rightarrow \gamma_2 \in (0,1)$
and other conditions  of  Theorem \ref{CLT} still hold, then 
we have   the conclusion of Theorem \ref{CLT}  
only with the test statistic $T^2$ in (\ref{ours}) is revised by
 $T_M$.
\end{corollary}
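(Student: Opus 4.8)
The plan is to deduce the corollary directly from Theorem~\ref{CLT} by checking that, under the common covariance assumption and $H_0$, the matrix $\bF_1$ is again an $F$-matrix with freedom degree $(1,\sum_{i=1}^q n_i-q)$, so that the entire chain of the proof of Theorem~\ref{CLT} carries over verbatim with $n_1-1$ replaced everywhere by $\sum_{i=1}^q n_i-q$ and $\gamma_{n_1}$ by $\gamma_n$.

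The first step is the distributional bookkeeping under $H_0$. With $\bS_1=\cdots=\bS_q=\bS$ and the $q$ samples independent Gaussian, the group means $\overline{\bx}_i\sim N_p(\bmu_i,\bS/n_i)$ are mutually independent; within each group $\sum_{k=1}^{n_i}(\bx_{ik}-\overline{\bx}_i)(\bx_{ik}-\overline{\bx}_i)'\sim W_p(n_i-1,\bS)$ is independent of $\overline{\bx}_i$, so summing over the $q$ independent groups the pooled matrix $\sum_{i=1}^q\sum_{k=1}^{n_i}(\bx_{ik}-\overline{\bx}_i)(\bx_{ik}-\overline{\bx}_i)'\sim W_p(\sum_i n_i-q,\bS)$ is independent of $(\overline{\bx}_1,\dots,\overline{\bx}_q)$ and hence of $\overline{\by}=\sum_i\beta_i\overline{\bx}_i$. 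Under $H_0$ one has $\overline{\by}-\bmu_0\sim N_p(\bo,(\sum_i\beta_i^2/n_i)\bS)$, so with the appropriate scalar normalization absorbed into the numerator of $\bF_1$ (chosen so that the rank-one numerator is exactly $W_p(1,\bS)$) the numerator of $\bF_1$ and $(\sum_i n_i-q)\mS$ are independent Wishart matrices $W_p(1,\bS)$ and $W_p(\sum_i n_i-q,\bS)$. Keeping track of this scalar and of the count $\sum_i n_i-q$ is the one point that differs from the setup of Theorem~\ref{CLT}, and I expect it to be the only real (though routine) obstacle.

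Once this is in place, the argument of Theorem~\ref{CLT} applies without change. Since numerator and denominator of $\bF_1$ are already centred and carry the same parameter $\bS$, standardizing the entries does not alter the eigenvalues of $\bF_1$, so $\bF_1$ has the same ESD and LSD as the $F$-matrix $\bV_n$ of Lemma~\ref{T2.1} with $n=(1,\sum_i n_i-q)$, and the scheme $\gamma_p=p/1\to\gamma_1\in(0,+\infty)$, $\gamma_n=p/(\sum_i n_i-q)\to\gamma_2\in(0,1)$ is exactly Assumption~[B]. Taking $f(x)=x$ and $G_n(f)=p\int f\,d\!\left(F_n^{\bF_1}-F_{\gamma_p,\gamma_n}\right)$, Lemma~\ref{T2.1} gives $G_n(f)\Rightarrow N(\mu(f),\upsilon(f))$ with $\mu(f),\upsilon(f)$ of the same form as (\ref{testE})--(\ref{testVar}) but now evaluated at the new $\gamma_2$, while $\int x\,dF_{\gamma_p,\gamma_n}(x)=d(\gamma_n)$ as in (\ref{limit}). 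Since $T_M=\mb{tr}(\bF_1)=p\int x\,dF_n^{\bF_1}(x)=G_n(f)+p\,d(\gamma_n)$, solving for the standardized statistic yields $\upsilon(f)^{-1/2}\left[\,T_M-p\,d(\gamma_n)-\mu(f)\,\right]\Rightarrow N(0,1)$, which is the conclusion of Theorem~\ref{CLT} with $T^2$ replaced by $T_M$.
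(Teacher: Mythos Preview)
Your proposal is correct and follows exactly the route the paper indicates: the paper does not give a separate proof of this corollary but simply notes that $\bF_1$ is an $F$-matrix with freedom degree $(1,\sum_i n_i-q)$ and then states that the result follows by ``Applying the Theorem~\ref{CLT}'', which is precisely what you carry out in detail. Your explicit distributional verification (independence of the pooled within-group Wishart and the group means, and the observation that the scalar in the numerator must be chosen so that it is $W_p(1,\bS)$) fills in the bookkeeping the paper leaves implicit, and your remark about the normalization constant in the numerator of $\bF_1$ is well taken given how $T_M$ is written.
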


For  the test on the equality of  two mean vectors  with common  covariance matrix,  we have 
\[
T_D =\frac{n_1n_2}{n_1+n_2}\overline{\by}'\mS^{-1}\overline{\by}=\mb{tr}(\bF_2)
\]
where 
\[\overline{\by}=\bar\bx_1-\bar\bx_2, \quad  \mS=\frac{1}{n_1+n_2-2}\sum\limits_{i=1}^2\sum\limits_{k=1}^{n_i}(\bx_{ik}-\overline\bx_i)(\bx_{ik}-\overline\bx_i)'
\]
and 
\[\bF_2=\frac{n_1n_2}{n_1+n_2}\overline{\by}\overline{\by}'\mS^{-1}\]
is satisfied for the definition of $F$-matrix with freedom degree $(1, n_1+n_2-2)$.
Applying the Theorem  \ref{CLT}, the corresponding corollary is  given as below.

\begin{corollary}
For testing $H_0 : \bmu_1=\bmu_2 $ with equal covariance matrix $\bS$,   assume that 
$p/1=\gamma_p \rightarrow \gamma_1 \in (0,+\infty)$  and
$p/(n_2+n_2-2)=\gamma_{n} \rightarrow \gamma_2 \in (0,1)$
and other conditions  of  Theorem \ref{CLT} still hold, then 
we have   the conclusion of Theorem \ref{CLT}  
only with the test statistic $T^2$ in (\ref{ours}) is revised by
 $T_D$.
\end{corollary}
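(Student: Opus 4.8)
The plan is to observe that $T_D$ has exactly the same structure as the statistic $T^2$ treated in Theorem~\ref{CLT} --- it is the trace of an $F$-matrix whose two ``degrees of freedom'' are $1$ and $n_1+n_2-2$ --- and then to run the proof of Theorem~\ref{CLT} verbatim after relabeling the second limiting ratio. First I would record the null distribution of the ingredients. Under $H_0:\bmu_1=\bmu_2$ with $\bS_1=\bS_2=\bS$ and the Gaussian model, $\overline{\by}=\overline{\bx}_1-\overline{\bx}_2\sim N_p\!\left(\bo,\bigl(\tfrac1{n_1}+\tfrac1{n_2}\bigr)\bS\right)$, so $\tfrac{n_1n_2}{n_1+n_2}\,\overline{\by}\,\overline{\by}'\sim W_p(1,\bS)$; the pooled within-group sum of squares $(n_1+n_2-2)\mS$ is the sum of the two independent within-group Wisharts $W_p(n_i-1,\bS)$, $i=1,2$, hence is $W_p(n_1+n_2-2,\bS)$; and the two pieces are independent, since each group's sample mean is independent of its within-group deviations. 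Therefore $\bF_2=\tfrac{n_1n_2}{n_1+n_2}\overline{\by}\,\overline{\by}'\mS^{-1}$ is genuinely an $F$-matrix with freedom degree $(1,\,n_1+n_2-2)$, and $T_D=\mb{tr}(\bF_2)=p\int x\,\md F_n^{\bF_2}(x)$.

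Next I would check that Lemma~\ref{T2.1} applies with the pair $(y_1,y_2)$ replaced by $(\gamma_1,\gamma_2)$. The hypothesis $\gamma_2\in(0,1)$, i.e.\ $p/(n_1+n_2-2)\to\gamma_2<1$, ensures $n_1+n_2-2>p$ eventually, so $\mS$ is a.s.\ invertible and $\bF_2$ is well defined, meeting the ``$n_2>p$'' requirement in the definition of $\bV_n$. Because both numerator and denominator of $\bF_2$ are already centered and carry the common scale matrix $\bS$, standardizing the entries leaves $\bF_2$ unchanged, so one may assume the underlying array has mean zero and unit variance; Assumption~[A] is then automatic under the Gaussian model used to identify the $F$-matrix (and, as in the remark after Theorem~\ref{CLT}, holds more generally under a finite fourth moment). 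Consequently $\bF_2$ and the reference matrix $\bV_n$ of Lemma~\ref{T2.1} share the LSD $F_{\gamma_1,\gamma_2}$, and $\tfrac1p T_D\to d(\gamma_2)=1/(1-\gamma_2)$ a.s., exactly as in (\ref{limitd})--(\ref{limit}).

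The concluding step is then a transcription of the proof of Theorem~\ref{CLT}: take $f(x)=x$, set $\gamma_n=p/(n_1+n_2-2)$, and form $G_n(f)=p\int f\,\md\!\left(F_n^{\bF_2}-F_{\gamma_p,\gamma_n}\right)$. Since $F_{\gamma_p,\gamma_n}(f)=d(\gamma_n)$ as in (\ref{limit}), we have $T_D=G_n(f)+p\cdot d(\gamma_n)$, and Lemma~\ref{T2.1} gives $G_n(f)\Rightarrow N\!\left(\mu(f),\upsilon(f)\right)$ with $\mu(f),\upsilon(f)$ given by the same formulas (\ref{testE}) and (\ref{testVar}) evaluated at the new $\gamma_2$ (the contour computations of \ref{A3}--\ref{A4} are insensitive to which ratio is called $\gamma_2$). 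Rearranging yields $\upsilon(f)^{-1/2}\!\left[T_D-p\cdot d(\gamma_n)-\mu(f)\right]\Rightarrow N(0,1)$, which is precisely the statement of Theorem~\ref{CLT} with $T^2$ replaced by $T_D$. There is no genuine obstacle here; the only points requiring care are the degrees-of-freedom bookkeeping --- that pooling the two within-group Wisharts produces exactly $n_1+n_2-2$, and that it is this quantity whose ratio to $p$ must converge into $(0,1)$ --- and the verification that Assumption~[A] survives the linear transformations defining $\overline{\by}$ and the within-group residuals, which reduces to the stated finite fourth-moment hypothesis.
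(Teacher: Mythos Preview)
Your proposal is correct and follows exactly the paper's approach: the paper simply states that the corollary follows by ``applying Theorem~\ref{CLT}'' with $T^2$ replaced by $T_D$ and the second degree-of-freedom parameter $n_1-1$ replaced by $n_1+n_2-2$, without spelling out a separate proof. You have faithfully expanded this into the underlying verification---the Wishart identifications for numerator and denominator, the independence, and the transcription of the $G_n(f)$ argument---which is precisely what the paper leaves implicit.
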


 
\section{Simulation Study}\label{Sim}

 In this section, simulations  are  conducted to evaluate the test statistics    that we proposed based on likelihood $T^2$ test statistic. Two hypotheses $  H_{0a} : \sum_{i=1}^3  \beta_i  \bmu_i  =\bo $  and  $H_{0b}: \bmu_1=\bmu_2$ with unequal covariance matrices are investigated  without loss of generality.  
We also  present the corresponding simulation results of other  tests as a comparison, for examples tests in \citet{JSPI}(TNT) for $H_{0a}$  and  the tests  in \citet{JSPI}(TNT)  and \citet{Chen} (CQT) for $H_{0b}$. The samples are  generate from the model 
$$
\bx_{ij}=\Gamma_i\bz_{ij}+\bmu_i, \quad  i=1,\cdots,q,  j= 1,\cdots, n_i
$$
where $\bz_{ij}=(z_{ij1}, \cdots, z_{ijp})'$ and $\{z_{ijk}, k=1,\cdots,p \}$ are independently distributed  as one of the following distribution assumptions:
$$
\mb { (i) }  N(0,1); \quad \quad \quad    \mb { (ii) } Gamma(4,0.5)-2;
$$

For the covariance matrix $\bS_i, i=1,2,3$,  the following  cases concerned with the dimension $p$ are taken into account  by 
\begin{eqnarray}
&&\bS_i=\Gamma_i^2=W_i \Phi_i W_i\\
&&W_i= \mb{diag}(w_{i1},\cdots, w_{ip}), w_{ij}=2\times i+(p-j+1)/p\\
&&\Phi_i =\left(\phi^{(i)}_{jk}\right), \phi^{(i)}_{jj}=1,  \phi^{(i)}_{jk}=(-1)^{(j+k)}(0.2\times i)^{|j-k|^{0.1}},  j \neq k,
\end{eqnarray}
which is cited from \citet{Hu2015}.
The suitable mean vectors $\bmu_i, i=1,2,3$ are chosen  for different  hypotheses. 
 
 First, for two-sample problem $H_{0a}: \bmu_1=\bmu_2$,  the null hypothesis is assumed to be $\bmu_1=\bmu_2=\bo$ without  loss of generality.  Denote  $\Delta \bmu=(\epsilon\sqrt{2\log(p)}\cdot\bone'_{[p^{v_0}]},\bo'_{p-[p^{v_0}]})'$, where 
$\bone_p$ represents a vector with that all elements  are 1, $[\cdot]$ denotes the integer truncation function and $\epsilon, v_0$  are varying constants.  
For the alternative hypothesis,  the sparse model similar to the one in \citet{Chen} is applied,  which describes $
\bx_{ij}=\Gamma_i\bz_{ij}+\bmu_i,  i=1,2,  j= 1,\cdots, n_i$
 and $\bmu_1=\bo$, $\bmu_2=\Delta \bmu$

 Secondly, we consider  the three groups testing problem $  H_{0b} : \sum_{i=1}^3  \beta_i  \bmu_i  =\bo $. Under the null hypothesis, we choose two cases of $\beta_i, i=1,2,3$.  One is $\beta_1=\beta_2=\beta_3=1$, and the corresponding   mean vectors  are generally  selected as $\bmu_1=1, \bmu_2=1,  \bmu_3=-2$. The other one is $\beta_1=\beta_2=-\frac{1}{2}, \beta_3=1$, and the corresponding   mean vectors  are given as  $\bmu_1=1, \bmu_2=3,  \bmu_3=2$ without loss of generality. The alternative hypotheses are designed  that  the $\bmu_3$ is the value under the null hypothesis added  $\Delta\bmu$ described as above, while $\bmu_1$ and $\bmu_2$ remain unchanged. 

  For each set of the scenarios, we report both empirical Type I errors and powers with 10,000 replications at $\alpha=0.05$ significance  level. Different pair values of $p, n_1, n_2, n_3$ are selected, and  $\epsilon$ is varying from 0 to  0.9 or 1 to show the empirical sizes and powers. The mean parameter is supposed to be unknown and substituted by the sample mean during the calculations. Simulation results of empirical Type I errors and powers for the three group tests  are listed in the Table  \ref{tab:1} and Table  \ref{tab:2}.  Simulation results of empirical Type I errors and powers for the two-sample test  are  represented in
 Table \ref{tab:3}.

   \begin{table}[htbp]
  \centering\caption{ Empirical sizes and powers of the comparative tests for  $H_0 : \sum\limits_{i=1}^3  \beta_i  \bmu_i  =\bo  $  with $\beta_1=\beta_2=\beta_3=1$ at$~\alpha=0.05$ significance level  for normal and gamma random vectors with 10,000  replications.  The alternative hypothesis is    $\bmu_3=-(\bmu_1+\bmu_2)+\bmu_{\Delta}$, $\bmu_{\Delta}=(\epsilon\sqrt{2\log(p)}\cdot\bone'_{[p^{v_0}]},\bo'_{p-[p^{v_0}]})'$ 
  \label{tab:1} }
   \begin{tabularx}{14cm}{ccXXXX}   
\hline
  {$(p, n_1, n_2, n_3)$  }         &              & \multicolumn{2}{c}{$(40, 90, 100, 100)$  }                  &\multicolumn{2}{c}{$(40, 180, 200, 200)$}                          \\
     &&Ours & TNT&Ours & TNT\\
     \hline
        &    & \multicolumn{2}{c}{\underline{\quad\quad\quad  $v_0$=0.4\quad \quad \quad }}                    &\multicolumn{2}{c}{\underline{\quad\quad\quad  $v_0$=0.2\quad \quad \quad}}    \\
  Normal  &  $\epsilon=0$ (size) &0.0647           &0.0697         &0.0614          &0.0707  \\ 
               &  0.2                           &0.1056           &0.0772         &0.1059          &0.0721  \\ 
               &  0.4                           &0.2568           &0.0872         &0.3042          &0.0855  \\ 
               &  0.6                           &0.5795           &0.1235         &0.6888          &0.1181  \\ 
               &  0.8                           &0.8739           &0.2055         &0.9496          &0.2038  \\ 
               &  1                              &0.9854           &0.4134         &0.9986          &0.4014  \\ 
                &    & \multicolumn{2}{c}{\underline{\quad\quad\quad  $v_0$=0.5\quad \quad \quad }}                    &\multicolumn{2}{c}{\underline{\quad\quad\quad  $v_0$=0.3\quad \quad \quad}}    \\
 Gamma &  $\epsilon=0$ (size)  &0.0758           &0.0726         &0.0641          &0.0690  \\ 
               &  0.2                           &0.1022           &0.0748         &0.0960          &0.0738  \\ 
               &  0.4                           &0.2071           &0.0838         &0.2400          &0.0865  \\ 
               &  0.6                           &0.4491           &0.1194         &0.5571          &0.1106  \\ 
               &  0.8                           &0.7804           &0.1609         &0.8848          &0.1660  \\ 
               &  1                              &0.9588           &0.2896         &0.9912          &0.2780  \\ 
  \end{tabularx} 
  \begin{tabularx}{14cm}{ccXXXX}   
\hline
  {$(p, n_1, n_2, n_3)$  }         &              & \multicolumn{2}{c}{$(80, 180, 200, 200)$  }                  &\multicolumn{2}{c}{$(120, 180, 200, 200)$}                          \\
     &&Ours & TNT &Ours & TNT\\
      \hline
         &    & \multicolumn{2}{c}{\underline{\quad\quad\quad  $v_0$=0.2\quad \quad \quad }}                    &\multicolumn{2}{c}{\underline{\quad\quad\quad  $v_0$=0.3\quad \quad \quad}}    \\
  Normal  &  $\epsilon=0$ (size) &0.0643           &0.0714         &0.0678          &0.0705  \\ 
               &  0.2                           &0.0906           &0.0750         &0.1085          &0.0757  \\ 
               &  0.4                           &0.2093           &0.0781         &0.2869          &0.0944  \\ 
               &  0.6                           &0.4895           &0.0985         &0.6701          &0.1143  \\ 
               &  0.8                           &0.8224           &0.1329         &0.9490          &0.1798  \\ 
               &  1                              &0.9773           &0.1890         &0.9980          &0.3558 \\ 
                 &    & \multicolumn{2}{c}{\underline{\quad\quad\quad  $v_0$=0.4\quad \quad \quad }}                    &\multicolumn{2}{c}{\underline{\quad\quad\quad  $v_0$=0.4\quad \quad \quad}}    \\
 Gamma &  $\epsilon=0$ (size)  &0.0662           &0.0683         &0.0743          &0.0732  \\ 
               &  0.2                           &0.0986           &0.0738         &0.0980          &0.0758  \\ 
               &  0.4                           &0.2887           &0.0847         &0.2392          &0.0848 \\ 
               &  0.6                           &0.6743           &0.1127         &0.5462          &0.1096  \\ 
               &  0.8                           &0.9572           &0.1776         &0.8808          &0.1615  \\ 
               &  1                              &0.9993           &0.3008         &0.9928          &0.2566  \\ 
 \hline
  \end{tabularx} 
  \end{table}

   \begin{table}[htbp]
  \centering\caption{ Empirical sizes and powers of the comparative tests for  $H_0 : \sum\limits_{i=1}^3  \beta_i  \bmu_i  =\bo  $  with $\beta_1=\beta_2=-\frac{1}{2}, \beta_3=1$ at$~\alpha=0.05$ significance level  for normal and gamma random vectors with 10,000  replications.  The alternative hypothesis is    $\bmu_3=\frac{1}{2}(\bmu_1+\bmu_2)+\bmu_{\Delta}$, $\bmu_{\Delta}=(\epsilon\sqrt{2\log(p)}\cdot\bone'_{[p^{v_0}]},\bo'_{p-[p^{v_0}]})'$ 
  \label{tab:2} }
  \begin{tabularx}{14cm}{ccXXXX}   
\hline
  {$(p, n_1, n_2, n_3)$  }         &              & \multicolumn{2}{c}{$(40, 90, 100, 100)$  }                  &\multicolumn{2}{c}{$(40, 180, 200, 200)$}                          \\
     &&Ours & TNT &Ours & TNT\\
      \hline
         &    & \multicolumn{2}{c}{\underline{\quad\quad\quad  $v_0$=0.3\quad \quad \quad }}                    &\multicolumn{2}{c}{\underline{\quad\quad\quad  $v_0$=0.1\quad \quad \quad}}    \\
  Normal  &  $\epsilon=0$ (size) &0.0691           &0.0705         &0.0670          &0.0692  \\ 
               &  0.2                           &0.1107           &0.0742         &0.0965          &0.0713  \\ 
               &  0.4                           &0.3162           &0.0895         &0.2204          &0.0845  \\ 
               &  0.6                           &0.6827           &0.1194         &0.4947          &0.0961  \\ 
               &  0.8                           &0.9433           &0.1935         &0.8139          &0.1299  \\ 
               &  1                              &0.9972           &0.3499         &0.9686          &0.1958  \\ 
                 &    & \multicolumn{2}{c}{\underline{\quad\quad\quad  $v_0$=0.1\quad \quad \quad }}                    &\multicolumn{2}{c}{\underline{\quad\quad\quad  $v_0$=0.1\quad \quad \quad}}    \\
 Gamma &  $\epsilon=0$ (size)  &0.0730           &0.0715         &0.0672          &0.0729  \\ 
               &  0.2                           &0.1093           &0.0723         &0.1093          &0.0723  \\ 
               &  0.4                           &0.2748           &0.0894         &0.6591          &0.0967  \\ 
               &  0.6                           &0.6045           &0.1067         &0.9886          &0.1465 \\ 
               &  0.8                           &0.9083           &0.1448         &1                   &0.2561  \\ 
               &  1                              &0.9933           &0.1947         &1                   &0.5925  \\ 
  \end{tabularx} 
  \begin{tabularx}{14cm}{ccXXXX}   
\hline
  {$(p, n_1, n_2, n_3)$  }         &              & \multicolumn{2}{c}{$(80, 180, 200, 200)$  }                  &\multicolumn{2}{c}{$(120, 180, 200, 200)$}                          \\
     &&Ours & TNT &Ours & TNT\\
     \hline
        &    & \multicolumn{2}{c}{\underline{\quad\quad\quad  $v_0$=0.2\quad \quad \quad }}                    &\multicolumn{2}{c}{\underline{\quad\quad\quad  $v_0$=0.2\quad \quad \quad}}    \\

  Normal  &  $\epsilon=0$ (size) &0.0626           &0.0676         &0.0682          &0.0678  \\ 
               &  0.2                           &0.1173           &0.0772         &0.1081          &0.0706  \\ 
               &  0.4                           &0.3596           &0.0799         &0.2471          &0.0814  \\ 
               &  0.6                           &0.7925           &0.1062         &0.5629          &0.0942  \\ 
               &  0.8                           &0.9848           &0.1526         &0.8832          &0.1167 \\ 
               &  1                              &0.9999           &0.2524         &0.9900          &0.1717  \\
                    &    & \multicolumn{2}{c}{\underline{\quad\quad\quad  $v_0$=0.1\quad \quad \quad }}                    &\multicolumn{2}{c}{\underline{\quad\quad\quad  $v_0$=0.1\quad \quad \quad}}    \\ 
 Gamma &  $\epsilon=0$ (size)  &0.0658           &0.0684         &0.0726          &0.0712  \\ 
               &  0.2                           &0.1324           &0.0746         &0.1107          &0.0709  \\ 
               &  0.4                           &0.4737           &0.0812         &0.3199          &0.0789  \\ 
               &  0.6                           &0.9214           &0.1070         &0.7183          &0.0978  \\ 
               &  0.8                           &0.9994           &0.1457         &0.9655          &0.1225 \\ 
               &  1                              &1                    &0.2375         &0.9993          &0.1725 \\ 
\hline
  \end{tabularx} 
  \end{table}

  \begin{table}[htbp]
  \centering\caption{ Empirical sizes and powers of the comparative tests for  $H_0 :  \bmu_1 =\bmu_2  $  at$~\alpha=0.05$ significance level  for normal and gamma random vectors with 10,000  replications.  The alternative hypothesis is    $\bmu_1=\bo$, $\bmu_2=(\epsilon\sqrt{2\log(p)}\cdot\bone'_{[p^{v_0}]},\bo'_{p-[p^{v_0}]})'$ 
  \label{tab:3} }
  \begin{tabularx}{14cm}{cXXXcXXX}   
\hline
     & \multicolumn{3}{c}{\underline{\quad\quad\quad  $v_0$=0.2\quad \quad \quad }}      &              &\multicolumn{3}{c}{\underline{\quad\quad\quad  $v_0$=0.1\quad \quad \quad}}    \\
  {$(p, n_1, n_2)$  }                       & \multicolumn{3}{c}{$(40, 90, 100)$  }       &            &\multicolumn{3}{c}{$(40, 180, 200)$}                          \\
     &Ours & TNT &CQT& &Ours & TNT&CQT \\
     \hline
        & \multicolumn{3}{c}{Normal }            &      &\multicolumn{3}{c}{Normal}  \\
                  $\epsilon=0$ (size) &0.0678           &0.0689         &0.0679      & $\epsilon=0$  &0.0645          &0.0663       &0.0677 \\ 
                 0.2                           &0.1068           &0.0805         &0.0696       &0.2                  &0.0953          &0.0782       &0.0701  \\ 
                 0.4                           &0.2403           &0.1179         &0.0705       &0.4                   &0.2444          &0.1195       &0.0750   \\ 
                 0.6                           &0.5222           &0.2344         &0.0725       &0.6                   &0.5588          &0.2344       &0.0770 \\ 
                 0.8                           &0.8323           &0.5470         &0.0776       &0.8                   &0.8715          &0.5360         &0.0790  \\ 
                 1                              &0.9748           &0.8863         &0.0926       &0.9                   &0.9549          &0.7267         &0.0850  \\ 
                   & \multicolumn{3}{c}{Gamma}               &   &\multicolumn{3}{c}{Gamma}  \\
                $\epsilon=0$ (size)   &0.0681           &0.0711         &0.0710        &$\epsilon=0$&0.0618          &0.0670          &0.0687 \\ 
                 0.2                           &0.1030           &0.0786         &0.0717        &0.2                  &0.0938          &0.0788       &0.0702   \\ 
                 0.4                           &0.2313           &0.1161         &0.0709        &0.4                  &0.2312          &0.1217       &0.0752   \\ 
                 0.6                           &0.5200           &0.2382         &0.0732        &0.6                   &0.5532          &0.2396       &0.0801         \\ 
                 0.8                           &0.8309           &0.5633         &0.0821        &0.8                 &0.8704          &0.5371         &0.0897  \\ 
                 1                              &0.9697           &0.8915         &0.0943       &0.9                   &0.9531          &0.7279         &0.0982  \\ 
  \end{tabularx} 
  \begin{tabularx}{14cm}{cXXXcXXX}   
\hline
     & \multicolumn{3}{c}{\underline{\quad\quad\quad  $v_0$=0.1\quad \quad \quad }}             &      &\multicolumn{3}{c}{\underline{\quad\quad\quad  $v_0$=0.2\quad \quad \quad}}    \\
  {$(p, n_1, n_2)$  }                       & \multicolumn{3}{c}{$(80, 180, 200)$  }          &       &\multicolumn{3}{c}{$(120, 180, 200)$}                          \\
&Ours & TNT &CQT& &Ours & TNT&CQT \\
  \hline
         & \multicolumn{3}{c}{Normal }                &  &\multicolumn{3}{c}{Normal}  \\
                  $\epsilon=0$ (size) &0.0661           &0.0706          &0.0680    &$\epsilon=0$      &0.0638          &0.0707         &0.0641  \\ 
                 0.2                           &0.0872           &0.0713          &0.0743     &0.2                     &0.1041          &0.0746         &0.0698  \\ 
                 0.4                           &0.1769           &0.0978          &0.0790     &0.4                     &0.2687          &0.1125         &0.0704  \\ 
                 0.6                           &0.4007           &0.1510          &0.0802      &0.6                    &0.6170          &0.2288         &0.0773   \\ 
                 0.8                           &0.6993           &0.2966          &0.0803      &0.8                    &0.9166          &0.6044         &0.0814   \\ 
                 1                              &0.9280           &0.5999          &0.0820       &0.9                   &0.9781           &0.8514        &0.0823 \\ 
                     & \multicolumn{3}{c}{Gamma}          &        &\multicolumn{3}{c}{Gamma}  \\
                 $\epsilon=0$ (size)  &0.0606           &0.0722          &0.0694          &$\epsilon=0$&0.0647          &0.0674         &0.0666\\ 
                 0.2                           &0.0828           &0.0724          &0.0752          &0.2               &0.1007          &0.0798         &0.0705 \\ 
                 0.4                           &0.1720           &0.0959          &0.0786          &0.4                &0.2609          &0.1131         &0.0699   \\ 
                 0.6                           &0.3856           &0.1486          &0.0795          &0.6               &0.6093          &0.2282         &0.0781   \\ 
                 0.8                           &0.6970           &0.2865          &0.0798          &0.8              &0.9114          &0.6149         &0.0815   \\ 
                 1                              &0.9279           &0.6010          &0.0818         & 0.9             &0.9741          &0.8499         &0.0842  \\ 
\hline
  \end{tabularx} 
  \end{table}

For three groups tests, as seen from the Table~\ref{tab:1} and Table~\ref{tab:2},  the advantages of our proposed tests   compared with the tests in \citet{JSPI}(TNT) for $H_{0a}$ are listed in two aspects. First,  almost all  of the empirical Type I errors of our proposed test are  around the nominal size 5\%, which 
 are better than that of TNT.  Although,  the empirical size of
  the proposed test is slightly higher  for the case of $p=40, n_1=90, n_2=n_3=100 $ under the Gamma assumption, it  can be accepted and understood due to both asymptotic and nonparametric. Further more, it decreases with the increasing dimension  $p$ and sample size $n_1$.

Secondly,  it is obvious that our proposed tests give  a much better performance on the empirical powers, which uniformly dominates that of the TNT over the entire range. For examples,  under the Normal assumption  with 
$\beta_1=\beta_2=-\frac1{2}, \beta_3=1$ in Table \ref{tab:2}, our empirical powers is 96.86\% closing to 1, while the one of TNT is only around 20\% for the case of $p=40, n_1=180, n_2=n_3=200$ and $\epsilon=1, v_0=0.1$. When  the dimension   increases to  $p=120$,  the empirical power of our proposed 
test rises up to 99\%, but the one of TNT remains under  20\%.

For  two-samples test,  we compared our  test with  the ones in \citet{JSPI} (TNT)  and \citet{Chen}  (CQT) for $H_{0b}$  together.  As seen from the Table~\ref{tab:3},  it was the same thing for the comparison to TNT. First,  all the empirical sizes of our proposed test  are   one upon that  of  TNT, and our empirical powers  grows up to 1 rapidly,  which are superior to that of TNT. 
Then  let  us make a comparative analysis between CQT and our test.
 It can be easily found that  the empirical sizes of the CQT are slightly higher than that of our proposed test    for  almost all  cases.   Further,   CQT behaves even worse on the empirical powers,  like  $p=120, n_1=180, n_2=n_3=200$ and $\epsilon=0.9, v_0=0.2$ under the Normal assumption,  the  empirical power of  the CQT remains under  10\%  when our  empirical power increases to 1.

Finnally, It must be pointed that the proposed test cannot be use for the ultra high dimension $p> n_1$. On one hand,  the condition of 
$p< n_1$ is requested to guarantee   the inverse of  sample covariance  matrix of $\by_i$. On the other hand, the limiting variance 
$\nu(f)$ is related with  $\gamma_1$. If the dimension $p$ is large enough, it will make the proposed test unstable.


\section{Conclusion}\label{Con}

In this paper,  the  new testing statistics based on likelihood  were proposed for the linear  hypotheses tests of  the large dimensional mean  vectors  with unequal  covariance matrices.  By using   the  CLT for LSS of  a large dimensional  $F$-matrix in \citet{Zheng}, we guaranteed  that the tests proposed were feasible for the non-Gaussian variables in a wider range. Furthermore,  our test  methods provided the more  optimal powers due to the likelihood based statistics, meanwhile the empirical sizes were closer to the significant level. However, it is limited by the constrain  $p< n_1$, which is requested  for the existence of  the inverse of  sample covariance  matrix. For future works, maybe we can extend this work to other forms of test statistics by  large dimensional  spectral analysis in random matrix theory, and make it more powerful and applicable  for    different situations.


\section*{Acknowledgement}
The author thanks the reviewers for their helpful comments and suggestions to make an improvement of this article. 
This research was supported by the National Natural Science Foundation of China 11471140.



\appendix
\section{Proofs}\label{App}

\subsection{ Derivation of $d(\gamma_2)$ in (\ref{limitd}). \label{A1} }

Let $F_{\gamma_1,\gamma_2}(x) $ be the LSD of the matrix $\bF$,  
and denote  the integral 
\[ F_{\gamma_1,\gamma_2}(f) =\int _{a}^b \frac {x\cdot (1-\gamma_2)\sqrt{(b-x)(x-a)}}{2\pi x(\gamma _1+\gamma _2 x)} \md x,\]
where $f(x)=x$ and 
\[a=\left( \frac  {1-\sqrt{\gamma_1+\gamma _2-\gamma_1\gamma _2}}{ 1-\gamma_2 }\right)^2,\qquad
b=\left( \frac  {1+\sqrt{\gamma_1+\gamma _2-\gamma_1\gamma _2}}{ 1-\gamma_2 }\right)^2.
\]

Set  $
x=\frac{\left(1+h^2+2h\cos\theta\right)}{(1-\gamma_2)^{2}},  0 <
\theta < \pi$, where 
$h=\sqrt{\gamma_1+\gamma_2-\gamma_1\gamma_2}$. Then
\begin{eqnarray}
\sqrt{(b-x)(x-a)}= \frac{2h\sin\theta}{(1-\gamma_2)^2},
\quad&\quad
\md x=-\displaystyle\frac{2h\sin\theta}{(1-\gamma_2)^2} \md \theta;\nonumber\\
x=\frac{\left|1+he^{i\theta}\right|^2}{(1-\gamma_2)^2},\quad&\quad
\displaystyle{\gamma_1+\gamma_2 x}=\displaystyle{\frac{\left|h+\gamma_2 e^{i\theta}\right|^2}
{(1-\gamma_2)^2}}.\nonumber
\end{eqnarray}
So we have
\begin{eqnarray*}
 &&\int _{a}^b \frac {x\cdot (1-\gamma_2)\sqrt{(b-x)(x-a)}}{2\pi x(\gamma _1+\gamma _2 x)} \md x \\&=&
 -\frac{2}{\pi (1-\gamma_2)}\int^{0}_{\pi}
  \frac{h^2\sin^2\theta}{\left|h+\gamma_2e^{i\theta}\right|^2} \md \theta\\
  &=&\frac{1}{\pi (1-\gamma_2)}\int_{0}^{2\pi}
  \frac{h^2\sin^2\theta}{\left|h+\gamma_2e^{i\theta}\right|^2} \md \theta\\
  &=&-\frac{1}{4\pi  i(1-\gamma_2)}\oint_{|\xi|=1}
  \frac{h^2(\xi-\xi^{-1})^2}{ 
      \left|h+\gamma_2\xi\right|^2 \xi} \md\xi\\
 &=&-\frac{h}{4\pi  i(1-\gamma_2)\gamma_2}\oint_{|\xi|=1} 
 \frac{(\xi^2-1)^2}{
      (\xi+\frac{h}{\gamma_2})(\xi+\frac{\gamma_2}{h} )\xi^2 } \md\xi\end{eqnarray*}
There are two poles inside the unit circle: 0, $-\frac{\gamma_2}{h}$.
Their corresponding residues are
\begin{eqnarray}
\mb{  Res}(0) &=& \frac{-h^2-\gamma_2^2}{\gamma_2 h} ,\nonumber\\
   \mb{ Res}(-\frac{\gamma_2}{h})&=&
   \frac{h^2-\gamma_2^2}{\gamma_2 h}.\nonumber
\end{eqnarray}

Therefore
\begin{eqnarray*}
F_{\gamma_1,\gamma_2}(f) &=&-\frac{h}{4\pi  i(1-\gamma_2)\gamma_2}
\cdot 2\pi i \left(
\mb{  Res}(0) +\mb{ Res}(-\frac{\gamma_2}{h})\right) \\
&=&\frac{1}{1-\gamma_2} \equiv d(\gamma_2).
\end{eqnarray*}

Similarly,  
 $d(\gamma_{n_1})$  in Theorem \ref{CLT}  is exactly analogous to   the $d(\gamma_2)$  by substituting $\gamma_{n_1}$ for $\gamma_2$, i.e.
      \[F_{\gamma_{p},
      \gamma_{n_1}}(f)=\int f(x)dF_{\gamma_{p},
      \gamma_{n_1}}(x) =\frac{1}{1-\gamma_{n_1}} \equiv d(\gamma_{n_1})
\]
 where $F_{\gamma_{p},
      \gamma_{n_1}} $   is analogous  to LSD of the matrix $\bF$,  which has
  a density in (\ref{LSDden}) but with $\gamma_{p},
      \gamma_{n_1} $ instead of $y_k,
  k=1,2.$, respectively.

\subsection{  Derivations of the corrected   Corollary~3.2 in  \citet{Zheng} . \label{A2}}

Because it is difficult to apply Lemma \ref{T2.1} directly, which has the complex form of  the asymptotic mean and covariance. 
So  the Corollary~3.2 in  \citet{Zheng} was proposed to help the evaluation of the asymptotic  mean and covariance.  However,  the result of  the Corollary~3.2 in  \citet{Zheng} is not correct,  I think it is a typo mistake. In order to obtain the accurate  and simplified  form for computing  the asymptotic mean and covariance, we reviewed 
it and gave some derivations and calculations as below.

First, make clear some notations:
\begin{itemize}
\item  $m(z)$ is  the
Stieltjes Transform of the LSD $F_{y_1,y_2}$, where  $F_{y_1,y_2}$  is the LSD of the $F$-matrix $\bV_n$.
Define 
\begin{equation}
\underline{m}(z)=-\frac{1-y_1}{z}+y_1m(z),\label{mz}
\end{equation}
then
$\underline{m}(z)$ is  the
Stieltjes Transform of ~ $\underline{F}_{y_1,y_2}\equiv (1-y_1)I_{[0,  \infty)}+y_1F_{y_1,y_2}$,  which  has an inverse equation as
\begin{equation}
z=-\frac{1}{\underline{m}(z)}+y_1\int \frac{\md F_{y_2}(x)}{x+\underline{m}(z)},
\label{inversemz}
\end{equation}
where $F_{y_2}$
 is the LSD of the matrix $\mS_2$. 
 \item Denote $m_{y_2}(z)$ is the Stieltjes Transform of  $F_{y_2}$, consequently 
\begin{equation}
\underline{m}_{y_2}(z)=-\frac{1-y_2}{z}+y_2m_{y_2}(z)\label{my2z}
\end{equation}
 is  the
Stieltjes Transform of ~ $\underline{F}_{y_2}\equiv (1-y_2)I_{[0,
    \infty)}+y_2F_{y_2}$,  which has an inverse 
    \begin{equation}
z=-\frac{1}{\underline{m}_{y_2}(z)}+\frac{y_2}{1+\underline{m}_{y_2}(z)},
\label{inversemy2z}
\end{equation}
Therefore, equation (\ref{inversemz}) can be written as
\begin{eqnarray}
z&=&-\frac{1}{\underline{m}(z)}+y_1\int \frac{\md F_{y_2}(x)}{x+\underline{m}(z)},\nonumber\\
&=&-\frac{1}{\underline{m}(z)}+y_1\cdot m_{y_2}\big(-\underline{m}(z)\big)\nonumber\\
&=&-\frac{1}{\underline{m}(z)}-\frac{y_1(1-y_2)}{y_2 \underline{m}(z)}+ \frac{y_1}{y_2} \underline{m}_{y_2}\big(-\underline{m}(z)\big)\nonumber\\
&=&-\frac{y_1+y_2-y_1y_2}{y_2\underline{m}(z)} + \frac{y_1}{y_2} \underline{m}_{y_2}\big(-\underline{m}(z)\big)
\label{zmzm0}
\end{eqnarray}

\item Let $m_0(z)=\underline{m}_{y_2}\big(-\underline{m}(z)\big)$, for simplicity denote it as $m_0$ if no confusion.  By the inverse equation (\ref{inversemy2z}), we have 
\begin{equation}
\underline{m}(z)=\frac{(1-y_2)\left(m_0+1/(1-y_2)\right)}{m_0(1+m_0)}. \label{mzm0}
\end{equation}
Combine equation (\ref{zmzm0}) and (\ref{mzm0}),  the relationship between $z$ and $m_0$ is obtained 
\begin{equation}
z=-\frac{m_0\left(m_0+1-y_1\right)}{(1-y_2)\left(m_0+1/(1-y_2)\right)}. \label{zm0}
\end{equation}

\end{itemize}
Since  the contour enclosed the supporting set of the LSD $F_{y_1,y_2}(x)$  of the $F$-matrix $\bV_n$,  which contains the interval 
\[\left[a'=\frac{(1-h')^2}{(1-y_2)^2},  b'=\frac{(1+h')^2}{(1-y_2)^2}\right]\]
if  $y_1 \leq 1$ , where $h'=\sqrt{y_1+y_2-y_1y_2}$. 
 When $y_1>1$,  the contour should enclose the whole  support $\{0\}  \cup [ a', b']$, because the $F_{y_1,y_2}$ has a positive mass at the origin at this time.  However,  due to the exact separation theorem in \citet{BS99}, for large enough $p$ and $n$, the discrete mass at the origin  will coincide  with that of $F_{y_1,y_2}$. So we can restrict 
the integral on the contours  only enclosed the continuous part of the LSD  $F_{y_1,y_2}$.
Therefore,  solve the real roots of the equation (\ref{zm0}) at two points $a',b'$,  we obtain 
\[m_0(a')=-\frac{1-h'}{1-y_2}, \quad  m_0(b')=-\frac{1+h'}{1-y_2}\]
It is obviously  that   when $z$ runs in the positive  direction around  the interval $[a', b']$, $m_0(z)$ runs in the same direction around the interval $[-\frac{1-h'}{1-y_2}, -\frac{1+h'}{1-y_2}]$.  So define $m_0(z)=-\frac{1+h'\tau\xi}{1-y_2}$, where $\tau>1$ but very close to 1, and  $|\xi|=1$.  By  (\ref{zm0}), 
\begin{equation}
z=\frac{1+h'^2+h'\tau^{-1}\bar \xi+h' \tau \xi }{(1-y_2)^2}.\label{zxi}
\end{equation}
 Further, 
 \[\underline{m}'(z)=-\frac{(1-y_2)m^2_0+2m_0+1}{m^2_0(1+m_0)^2}  \cdot \md m_0 
 =\frac{(1-y_2)^2}{h' \tau}\cdot \frac{\left(\xi+\frac{\sqrt{y_2}}{h' \tau}\right )\left(\xi-\frac{\sqrt{y_2}}{h' \tau}\right ) }{\left(\xi+\frac{y_2}{h' \tau}\right )^2  \left(\xi+\frac{1}{h' \tau}\right )^2} \md \xi\]
 and 
 \[\underline{m}(z)=-\frac{(1-y_2)^2}{h' \tau} \frac{\xi}{\left(\xi+\frac{y_2}{h' \tau}\right )  \left(\xi+\frac{1}{h' \tau}\right )}.\]
 
 Put these results into the expressions of the asymptotic mean  and covariance in Lemma~\ref{T2.1}. According to the definition (\ref{zxi}), when $z$ anticlockwise runs along the unit circle, $z$ anticlockwise runs  around a contour closely enclosed  the interval $[a', b']$ when $\tau $ is closed to 1. Thus, letting  $\tau \downarrow 1 $, we have 
 
  \begin{eqnarray}
  &&\mu(f_j)=\frac{\kappa-1}{4\pi i} \! \oint f_j(z) \mb{d} 
\log \!\left(\!\frac{(1-y_2)m_0^2(z)+2m_0(z)\!+\!1\!-\!y_1}{(1-y_2)m_0^2(z)+2m_0(z)+1}\!\right)  \nonumber\\
&&+\frac{\kappa-1}{4\pi i} \oint f_j(z) \md 
\log\left(1-y_2m_0^2(z)(1+m_0(z))^{-2}\right) \nonumber\\ 
&&+\frac{\beta_x y_1 }{2 \pi i} \oint f_j(z)\left(1+m_0(z)\right)^{-3} \md m_0(z) \nonumber\\
&&+\frac{\beta_y  }{4 \pi i} \oint f_j(z)
\left(1-\frac{y_2m_0^2(z)}{(1+m_0(z))^{2}}\right) \md \log
\left(1-\frac{y_2m_0^2(z)}{(1+m_0(z))^{2}}\right) \nonumber\\[1mm]
&&=\!\lim\limits_{\tau \downarrow
1}\!\frac{\kappa-1}{4\pi i} \!\oint_{|\xi|=1}\!
f_j \! \left(\!\frac{1\!+\!h'^2\!+\!2h'\mb{Re}(\xi)}{(1\!-\!y_2)^2}\!\right)\!\left[\!\frac{1}{\xi-{1\over \tau}}\!+\!\frac{1}{\xi+{1\over
\tau}}\!-\!\frac{2}{\xi+{y_2\over
{h' \tau}}} \!\right]\!\md\xi\label{E1}\nonumber \\
&& +\frac{\beta_x\cdot y_1(1-y_2)^2}{2\pi i \cdot
h'^2}\oint_{|\xi|=1}f_j\left(\!\frac{1+h'^2+2h'\mb{Re}(\xi)}{(1-y_2)^2}\!\right)\frac{1}{(\xi+\frac{y_2}{h'})^3} \md \xi\nonumber\\
&&+ \frac{\beta_y\cdot y_2(1-y_2)}{2 \pi i \cdot
h'}\oint_{|\xi|=1}f_j\left(\frac{1+h'^2+2h'\mb{Re}(\xi)}{(1-y_2)^2}\right)\frac{\xi +
\frac{1}{h'}}{(\xi+\frac{y_2}{h'})^3}\md\xi ,\nonumber
\end{eqnarray}
where $ j=1, \cdots, s$, and covariance
function
\begin{eqnarray*}
&&\upsilon\left(f_j,
f_\ell\right)=-\frac{\kappa}{4\pi^2}\oint \!\oint\frac{f_j(z_1)f_\ell(z_2)}{\left(m_0(z_1)-m_0(z_2)\right)^2}
\md m_0(z_1)\md m_0(z_2)    \nonumber\\
&&-\!\frac{\beta_x y_1+\beta_y y_2}{4\pi^2}\!\oint \!\oint \!\frac{f_j(z_1)f_\ell(z_2)}{(1+m_0(z_1))^2(1+m_0(z_2))^2}
\! \md m_0(z_1)\! \md m_0(z_2)   \nonumber\\
&&\upsilon\!\left(f_j,
f_\ell\right)\!=\!-\lim\limits_{ \tau \downarrow
    1}   \!\frac{\kappa}{4\pi^2}\!\oint_{|\xi_1\!|\!=\!1}
\!\oint_{|\xi_2\!|\!=\!1}\frac{f_j\!\left(\!\frac{1\!+\!h'^2\!+\!2h'\mb{Re}(\xi_1\!)}{(1-y_2)^2}\!\right)
f_\ell \!\left(\!\frac{1\!+\!h'^2\!+\!2h'\mb{Re}(\xi_2\!)}{(1-y_2)^2}\right)}{(\xi_1-\tau\xi_2)^2}
\!\md\xi_1\!\md\xi_2,\\
&& - \frac{(\beta_x y_1 \!+ \!\beta_y  y_2 \!)( \!1 \!- \!y_2 \!)^2}{4  \pi^2 \!h'^2}
 \!\oint_{|\xi_1\!| \!=\!1}  \! \frac{f_j  \!\left( \!\frac{ \!1 \!+ \!h'^2 \!+ \!2h'\mb{Re}(\xi_1)}{(1 \!- \!y_2)^2}\right)}{(\xi_1 \!+ \!\frac{y_2}{h'})^2}  \!\md\xi_1
 \!\oint_{|\xi_2\!| \!=\!1} \! \frac{f_j  \!\left( \!\frac{ \!1 \!+ \!h'^2 \!+ \!2h'\mb{Re}(\xi_2)}{(1 \!- \!y_2)^2}\right)}{(\xi_2 \!+ \!\frac{y_2}{h'})^2} \!\md\xi_2\nonumber
\end{eqnarray*}
where  $ j,\ell \in \{1, \cdots,
s\}$, "$\mb{Re}$" represents the real part of $\xi$ and $\tau \downarrow
1$ means that " $\tau$ approaches 1 from above'.

\subsection{   Calculation of  $\mu (f)$ in (\ref{testE}).\label{A3}}

For  the function $f(x)=x$,  the computation of  $\mu (f)$ is divided  into three parts. 
Still use the denotation $h=\sqrt{\gamma_1+\gamma_2-\gamma_1\gamma_2}$, then the first part is 

\begin{eqnarray*}
I_1&=&\lim\limits_{\tau \downarrow
1}\frac{\kappa-1}{4\pi i}\oint_{|\xi|=1}
f\left(\frac{1+h^2+2h\mb{Re}(\xi)}{(1-\gamma_2)^2}\right)\left[\frac{1}{\xi-{1\over \tau}}+\frac{1}{\xi+{1\over
\tau}}-\frac{2}{\xi+{\gamma_2\over
{h \tau}}}\right] \md\xi\\
&=&\lim\limits_{\tau \downarrow
1} \frac{\kappa-1}{4\pi i}\oint_{|\xi|=1} \frac{\left|1+h\xi\right|^2}{(1-\gamma_2)^2}
\left(\frac{1}{\xi-{1\over
\tau}}+\frac{1}{\xi+{1\over \tau}}-\frac{2}{\xi+{\gamma_2\over {h\tau}}}
\right) \md \xi\\
&=&\lim\limits_{\tau \downarrow
1} \frac{(\kappa-1)h}{4\pi i (1-\gamma_2)^2}\oint_{|\xi|=1} \frac{\left(\xi+{1 \over h}\right)\left(\xi+h\right)}{\xi}
\left(\frac{1}{\xi-{1\over
\tau}}+\frac{1}{\xi+{1\over \tau}}-\frac{2}{\xi+{\gamma_2\over {h\tau}}}
\right) \md \xi\\
&=& \lim\limits_{\tau \downarrow
1} \frac{(\kappa-1)h}{4\pi i (1-\gamma_2)^2} \cdot 2 \pi i \left[ \mb{Res}(0)+\mb{Res}({1\over \tau })+\mb{Res}(-{1\over \tau} )-2\mb{Res}(-\frac{\gamma_2}{h\tau})\right]\nonumber\\
&=&\frac{(\kappa-1)h}{4\pi i (1-\gamma_2)^2} \cdot 2 \pi i \left[-\frac{2h}{\gamma_2}+2+\frac{1+h^2}{h}-2+\frac{1+h^2}{h}+ \frac{2(1-\gamma_2)^2\gamma_1}{h\gamma_2}\right]\nonumber\\
&=&\frac{\gamma_2}{(1-\gamma_2)^2}
\end{eqnarray*}

The second  part is 

\begin{eqnarray*}
I_2&=&\frac{\beta_1\cdot \gamma_1(1-\gamma_2)^2}{2\pi i \cdot
h^2}\oint_{|\xi|=1}f\left(\frac{1+h^2+2h\mb{Re}(\xi)}{(1-\gamma_2)^2}\right)\frac{1}{(\xi+\frac{\gamma_2}{h})^3} \md \xi\\
&=&\frac{\beta_1\cdot \gamma_1(1-\gamma_2)^2}{2\pi i \cdot
h^2}\oint_{|\xi|=1} \frac{\left|1+h\xi\right|^2}{(1-\gamma_2)^2}
\frac{1}{(\xi+\frac{\gamma_2}{h})^3} \md \xi\\
&=&\frac{\beta_1\cdot \gamma_1}{2\pi i 
h}\oint_{|\xi|=1}  \frac{\left(\xi+{1 \over h}\right)\left(\xi+h\right)}{\xi}
\frac{1}{(\xi+\frac{\gamma_2}{h})^3} \md \xi\\
&=&\frac{\beta_1\cdot \gamma_1}{2\pi i 
h} \cdot 2 \pi i \left[ \mb{Res}(0)+\mb{Res}(-\frac{\gamma_2}{h})\right]\nonumber\\
&=&\frac{\beta_1\cdot \gamma_1}{2\pi i 
h} \cdot 2 \pi i \left[ \frac{h^3}{y^3_2}-\frac{h^3}{y^3_2}\right]\nonumber\\
&=&0
\end{eqnarray*}

The third  part is 

\begin{eqnarray*}
I_3&=& \frac{\beta_2\cdot \gamma_2(1-\gamma_2)}{2 \pi i \cdot
h}\oint_{|\xi|=1}f\left(\frac{1+h^2+2h\mb{Re}(\xi)}{(1-\gamma_2)^2}\right)\frac{\xi +
\frac{1}{h}}{(\xi+\frac{\gamma_2}{h})^3}\md\xi \\
&=& \frac{\beta_2\cdot \gamma_2(1-\gamma_2)}{2 \pi i \cdot
h}\oint_{|\xi|=1} \frac{\left|1+h\xi\right|^2}{(1-\gamma_2)^2}
\frac{\xi +\frac{1}{h}}{(\xi+\frac{\gamma_2}{h})^3}\md\xi \\
&=&\frac{\beta_2 \gamma_2}{2 \pi i (1-\gamma_2)
}\oint_{|\xi|=1}  \frac{\left(\xi+{1 \over h}\right)\left(\xi+h\right)}{\xi}
\frac{\xi +\frac{1}{h}}{(\xi+\frac{\gamma_2}{h})^3}\md\xi \\
&=&\frac{\beta_2 \gamma_2}{2 \pi i (1-\gamma_2)
} \cdot 2 \pi i \left[ \mb{Res}(0)+\mb{Res}(-\frac{\gamma_2}{h})\right]\nonumber\\
&=&\frac{\beta_2 \gamma_2}{2 \pi i (1-\gamma_2)
} \cdot 2 \pi i  \left[ \frac{h^2}{y^3_2}1-\frac{h^2}{y^3_2}\right]\nonumber\\
&=&\frac{\beta_2 \gamma_2}{ 1-\gamma_2}
\end{eqnarray*}

Finally, 
\[ \mu(f)= \frac{\gamma_2}{(1-\gamma_2)^2}+\frac{\beta_2\gamma_2}{1-\gamma_2}\]

\subsection{  Calculation of  $\upsilon (f)$ in (\ref{testVar}). \label{A4}}

 The computation of  $\upsilon (f)$ in (\ref{testVar}) is divided  into two parts. For the first part
\begin{eqnarray*}
&&- \lim\limits_{ \tau \downarrow
    1}   \frac{\kappa}{4\pi^2}\oint_{|\xi_2|=1}
\oint_{|\xi_1|=1}\frac{f\left(\frac{1+h^2+2h\mb{Re}(\xi_1)}{(1-\gamma_2)^2}\right)f_\ell\left(\frac{1+h^2+2h\mb{Re}(\xi_2)}{(1-\gamma_2)^2}\right)}{(\xi_1-\tau\xi_2)^2}
\md\xi_1\md\xi_2
\end{eqnarray*}
the following integral is computed firstly
\begin{eqnarray*}
&&\oint_{|\xi_1|=1}f\left(\frac{1+h^2+2h\mb{Re}(\xi_1)}{(1-\gamma_2)^2}\right)\frac 1{(\xi_1-\tau\xi_2)^2}\md\xi_1\\
&=&\oint_{|\xi_1|=1}\frac{\left|1+h\xi\right|^2}{(1-\gamma_2)^2}\frac 1{(\xi_1-\tau\xi_2)^2}\md\xi_1\\
&=&\frac{h}{(1-\gamma_2)^2} \oint_{|\xi_1|=1}\frac{\left(\xi+{1 \over h}\right)\left(\xi+h\right)}{\xi (\xi_1-\tau\xi_2)^2}\md\xi_1\\
&=&\frac{2\pi i h}{(1-\gamma_2)^2} \cdot \frac{1}{\tau^2\xi_2^2}
\end{eqnarray*}
Then 
we obtained 
\begin{eqnarray*}
&&- \lim\limits_{ \tau \downarrow
    1}   \frac{\kappa}{4\pi^2}\oint_{|\xi_2|=1}
\oint_{|\xi_1|=1}\frac{f\left(\frac{1+h^2+2h\mb{Re}(\xi_1)}{(1-\gamma_2)^2}\right)f_\ell\left(\frac{1+h^2+2h\mb{Re}(\xi_2)}{(1-\gamma_2)^2}\right)}{(\xi_1-\tau\xi_2)^2}
\md\xi_1\md\xi_2\\
&=&- \lim\limits_{ \tau \downarrow
    1}   \frac{\kappa}{4\pi^2}\cdot \frac{2\pi i h}{(1-\gamma_2)^2}\oint_{|\xi_2|=1}\frac{\left|1+h\xi_2\right|^2}{(1-\gamma_2)^2} \frac{1}{\tau^2\xi_2^2}\md\xi_2\\
    &=&-\lim\limits_{ \tau \downarrow
    1}    \frac{\kappa h^2}{2\pi i (1-\gamma_2)^4\tau^2}\oint_{|\xi_2|=1} \frac{\left(\xi_2+{1 \over h}\right)\left(\xi_2+h\right)}{\xi_2^3}\md\xi_2\\
    &=& \frac{\kappa h^2}{ (1-\gamma_2)^4}
\end{eqnarray*}

For the second part,
we calculate 
\begin{eqnarray*}
&&\oint_{|\xi_1|=1}\frac{f\left(\frac{1+h^2+2h\mb{Re}(\xi_1)}{(1-\gamma_2)^2}\right)}{(\xi_1+\frac{\gamma_2}{h})^2}\md\xi_1\\
&=& \frac{h}{(1-\gamma_2)^2} \oint_{|\xi_1|=1} \frac{\left(\xi_1+{1 \over h}\right)\left(\xi_1+h\right)}{\xi_1(\xi_1+\frac{\gamma_2}{h})^2}\md\xi_1\\
&=& \frac{2\pi ih}{(1-\gamma_2)^2}  \left(\frac{h^2}{\gamma_2^2} +\frac{\gamma_2^2-h^2}{\gamma_2^2}\right)\\
&=& \frac{2\pi ih}{(1-\gamma_2)^2}
\end{eqnarray*}
Then
the second part is 
 \begin{eqnarray*}
&& -\frac{(\!\beta_1 \!\gamma_1\!+\!\beta_2\!\gamma_2\!)(1\!-\!\gamma_2\!)^2}{4\pi^2h^2}
\!\oint_{|\xi_1\!|\!=\!1}\!\frac{f_j \!\left(\!\frac{1\!+\!h^2\!+\!2h\mb{Re}(\xi_1\!)}{(1-\gamma_2)^2}\right)}{(\xi_1+\frac{\gamma_2}{h})^2}\!\md\xi_1
\!\oint_{|\xi_2\!|\!=\!1}\!\frac{f_j \!\left(\!\frac{1\!+\!h^2\!+\!2h\mb{Re}(\xi_2\!)}{(1-\gamma_2)^2}\right)}{(\xi_2+\frac{\gamma_2}{h})^2}\!\md\xi_2\\
&&= -\frac{(\beta_1 \gamma_1+\beta_2\gamma_2)(1-\gamma_2)^2}{4\pi^2h^2}\cdot \frac{2\pi ih}{(1-\gamma_2)^2}\cdot \frac{2\pi ih}{(1-\gamma_2)^2}\\
&&=\frac{\beta_1 \gamma_1+\beta_2\gamma_2}{(1-\gamma_2)^2}
\end{eqnarray*}
 
 Finally, the covariance is 
 \[\upsilon (f)=\frac{\kappa h^2}{ (1-\gamma_2)^4}+\frac{\beta_1 \gamma_1+\beta_2\gamma_2}{(1-\gamma_2)^2}.\]



\end{document}